\def \ifempty#1{\def\temp{#1} \ifx\temp\empty }
\newcommand{\str}[1]{\textsc{#1}}
\newcommand{\var}[1]{\textit{#1}}
\newcommand{\op}[1]{\textsl{#1}}
\newcommand{\msg}[2]{\ensuremath{\ifempty{#2} [\str{#1}] \else [\str{#1}, {#2}] \fi}}
\newcommand{\CF}{\ensuremath{\mathcal{F}}\xspace}
\newcommand{\CG}{\ensuremath{\mathcal{G}}\xspace}
\newcommand{\CK}{\ensuremath{\mathcal{K}}\xspace}
\newcommand{\CP}{\ensuremath{\mathcal{P}}\xspace}
\newcommand{\CQ}{\ensuremath{\mathcal{Q}}\xspace}
\newcommand{\CT}{\ensuremath{\mathcal{T}}\xspace}
\newcommand{\BF}{\ensuremath{\mathbb{F}}\xspace}
\newcommand{\BQ}{\ensuremath{\mathbb{Q}}\xspace}
\newcommand{\nil}{\ensuremath{\bot}}
\newcommand{\false}{\textsc{false}\xspace}
\newcommand{\true}{\textsc{true}\xspace}
\newcommand{\etal}{\emph{et al.}}
\newcommand{\zo}{\{0,1\}}
\newcommand{\propose}{send}
\newcommand{\proposed}{sent}
\theoremstyle{definition}
\newtheorem{definition}{Definition}
\theoremstyle{lemma}
\newtheorem{lemma}{Lemma}
\theoremstyle{theorem}
\newtheorem{theorem}{Theorem}
\theoremstyle{corollary}
\newtheorem{corollary}{Corollary}
\title{\bf Weaker Assumptions for Asymmetric Trust}
\author{Ignacio Amores-Sesar\thanks{Aarhus University
\texttt{amores-sesar@cs.au.dk}},
Christian Cachin\thanks{University of Bern
\texttt{christian.cachin@unibe.ch}},
Simon Holmgaard Kamp\thanks{Ruhr University Bochum
\texttt{simonhkamp@gmail.com}},
Juan Villacis\thanks{University of Bern
\texttt{juan.villacis@unibe.ch}}}
\date{}
\begin{document}
\maketitle

\begin{abstract}
In distributed systems with asymmetric trust, each participant is free to make its own trust assumptions about others, captured by an asymmetric quorum system. This contrasts with ordinary, symmetric quorum systems and threshold models, where trust assumptions are uniformly shared among participants. Fundamental problems like reliable broadcast and consensus are unsolvable in the asymmetric model if quorum systems satisfy only the classical properties of consistency and availability. Existing approaches overcome this by introducing stronger assumptions. We show that some of these assumptions are overly restrictive, so much so that they effectively eliminate the benefits of asymmetric trust. To address this, we propose a new approach to characterize asymmetric problems and, building upon it, present algorithms for reliable broadcast and consensus that require weaker assumptions than previous solutions. Our methods are general and can be extended to other core problems in systems with asymmetric trust.
\end{abstract}

\section{Introduction}
Asymmetric trust models, such as those proposed by Damgård~\etal~\cite{DBLP:conf/asiacrypt/DamgardDFN07}, Alpos \etal~\cite{DBLP:journals/dc/AlposCTZ24}, and Losa~\etal~\cite{DBLP:conf/wdag/LosaGM19} allow the development of distributed protocols in which each participant can operate under its own trust settings. These algorithms are built on top of asymmetric quorum systems, where each process independently defines its own quorums. The traditional consistency and availability properties must be satisfied by these systems, but as was shown by Li~\etal~\cite{DBLP:conf/wdag/LiCL23}, they are not enough to solve fundamental problems like reliable broadcast and consensus. Therefore, the models introduce additional assumptions on the quorums. The protocols for reliable broadcast and consensus of Alpos~\etal~\cite{DBLP:journals/dc/AlposCTZ24} require the existence of a \textit{guild}, a set of processes that contains a quorum for each of its members. Li~\etal~\cite{DBLP:conf/wdag/LiCL23} and Losa~\etal~\cite{DBLP:conf/wdag/LosaGM19} identify similar conditions. These assumptions can be very strong and restrictive, and while they are sufficient to provide solutions to the aforementioned problems, we will show that they are not necessary. This leads to asymmetric algorithms that work under weaker assumptions, making such systems more flexible and usable. 

We start by showing that current assumptions for algorithms such as reliable broadcast or consensus are so restrictive that it becomes unnecessary to use asymmetric trust models altogether. That is, given an asymmetric trust assumption that satisfies such requirements, we can derive an equivalent symmetric trust assumption such that no process will be worse off by adopting it. This result is surprising, considering that Alpos~\etal~\cite{DBLP:journals/dc/AlposCTZ24} show that asymmetric trust is more expressive than symmetric trust. 

To solve this, we introduce the notion of \emph{depth} of a process and propose a new approach to characterize asymmetric problems based on it. We use this concept to provide more general definitions of unauthenticated, asymmetric, asynchronous, Byzantine reliable broadcast and binary consensus. We also propose new algorithms for these problems that forego the \emph{guild} requirement in favor of weaker assumptions on~\emph{depth}.

The remainder of this document is organized as follows. Section~\ref{sec:relatedwork} discusses related work in the asymmetric trust setting. Section~\ref{sec:preliminaries} introduces the model and reviews key concepts that
serve as the foundation for our work. We show in Section~\ref{sec:degeneracy} that the current assumptions are so strong that they invalidate the benefits of asymmetric trust. Section~\ref{sec:depth} introduces the concept of depth. Sections~\ref{sec:rb} and~\ref{sec:consensus} present algorithms for reliable broadcast and consensus that forego the guild assumption in favor of weaker depth assumptions. Finally, conclusions are drawn in Section~\ref{sec:conclusion}. 

\section{Related Work}
\label{sec:relatedwork}
Threshold and symmetric quorum systems are fundamental tools in distributed computing, commonly used to establish the safety and liveness guarantees of various protocols~\cite{DBLP:journals/siamcomp/NaorW98,DBLP:journals/dc/MalkhiR98,DBLP:books/daglib/0017536,DBLP:books/daglib/0025983}. They constitute a crucial component of many systems in cloud computing~\cite{DBLP:journals/sigops/LakshmanM10, DBLP:conf/usenix/HuntKJR10} and cryptocurrencies~\cite{DBLP:conf/sosp/GiladHMVZ17, DBLP:journals/corr/abs-1807-04938, buterin2018ethereum2}. One drawback is that all participants must use the same quorum system, which restricts their freedom to choose who to trust. The asymmetric model, introduced by Damg{\aa}rd~\etal~\cite{DBLP:conf/asiacrypt/DamgardDFN07} and further developed by Alpos~\etal~\cite{DBLP:journals/dc/AlposCTZ24}, represents a novel approach that generalizes the symmetric paradigm. It allows participants to make trust choices of their own, reflecting social connections or other information available to them outside the protocol. Each participant can specify its own quorum system, giving it the freedom to have its own trust assumptions. Amores-Sesar~\etal~\cite{DBLP:conf/podc/Amores-SesarCVZ25} show that algorithms relying on symmetric quorums cannot be applied directly to the asymmetric model, as certain properties do not hold if the underlying quorums are asymmetric. This drives the need to adapt existing symmetric protocols and redefine protocol properties to fit the new model.

Existing work~\cite{DBLP:journals/dc/AlposCTZ24} introduces asymmetric versions of relevant primitives like reliable broadcast, binary consensus, and common coin. Amores-Sesar~\etal~\cite{DBLP:conf/podc/Amores-SesarCVZ25} extend this work by developing a DAG-based consensus protocol that relies on such primitives. Losa~\etal~\cite{DBLP:conf/wdag/LosaGM19} present another approach to model asymmetric trust. They forego the notion of fail-prone sets and strengthen the definition of quorums, by requiring each quorum to contain a quorum for each member. Li~\etal~\cite{DBLP:conf/wdag/LiCL23} expand on this model and address the topic of which quorum properties are necessary or sufficient to solve core asymmetric problems such as consensus. Sheff~\etal~\cite{DBLP:conf/opodis/SheffWRM20} propose a variant of the Paxos protocol that incorporates heterogeneous trust assumptions. 
Senn~and~Cachin~\cite{DBLP:conf/papoc/SennC25} explore asymmetric trust in the crash-fault scenario and show that, in this setting, asymmetric trust reduces to symmetric trust.

Asymmetric trust introduces novel challenges to the design of distributed protocols. One significant hurdle is that of processes that have \textit{wrong friends}, that is, the existence of processes that are correct and behave according to the underlying algorithm, but whose trust assumptions do not capture the actual faulty processes. Alpos~\etal~\cite{DBLP:journals/dc/AlposCTZ24} use the term \textit{naive} to refer to such participants and point out that their presence can lead to safety and liveness issues in asymmetric protocols. To address this, they introduce the concept of a guild, a set of processes that contains a quorum for each member, and provide algorithms for reliable broadcast and consensus that work correctly only in executions that have this structure. They point out that having a quorum system that only satisfies the consistency and availability properties can lead to liveness issues for some processes. In a slightly different model, Li~\etal~\cite{DBLP:conf/wdag/LiCL23} prove that the traditional properties of quorum consistency and availability are not sufficient to solve reliable broadcast and consensus. They introduce a new property, called strong availability, very similar to the notion of a guild, and prove that this property is sufficient to solve the aforementioned problems. They do not comment on the necessity of such a property.

Asymmetric consensus protocols have also transcended into the blockchain space with XRP~Ledger~\cite{DBLP:journals/corr/abs-1802-07242}(\url{https://xrpl.org}) and Stellar~\cite{mazieres2015stellar,DBLP:conf/sosp/LokhavaLMHBGJMM19}(\url{https://stellar.org}) as main representatives. In the XRP Ledger, each participant declares its own assumptions by listing other participating nodes from which it considers votes \cite{DBLP:journals/corr/abs-1802-07242,DBLP:conf/opodis/Amores-SesarCM20}. Stellar uses a similar approach, in which each participant keeps a list of participants it considers important and waits for a suitable majority of them to agree on a transaction before considering it settled~\cite{mazieres2015stellar,DBLP:conf/wdag/LosaGM19,DBLP:conf/sosp/LokhavaLMHBGJMM19}.

\section{Preliminaries}
\label{sec:preliminaries}
\subsection{Model}

All algorithms we propose are for the asynchronous unauthenticated setting. We will consider a system of $n$ \emph{processes} $\mathcal{P}=\{p_1, \dots, p_n\}$ that interact with each other by exchanging messages. A protocol for $\mathcal{P}$ consists of a collection of programs with instructions for all processes. They are presented using the event-based notation of Cachin~\etal~\cite{DBLP:books/daglib/0025983}. An execution starts with all processes in a special initial state; subsequently the processes repeatedly trigger events, react to events, and change their state through computation steps.  A process that follows its protocol during an execution is called \textit{correct}. A \textit{faulty} process, also called \textit{Byzantine}, may crash or deviate arbitrarily from its specification. We will assume that there is a low-level functionality for sending messages over point-to-point links between each pair of processes. This functionality is accessed through the events of \textit{sending} and \textit{receiving} a message. Point-to-point messages are authenticated and delivered reliably among correct processes. Here and from now on, the notation $\mathcal{A}^*$ for a system $\mathcal{A} \subseteq 2^\mathcal{P}$ , denotes the collection of all subsets of the sets in $\mathcal{A}$, that is, $\mathcal{A}^* = \{A' | A' \subseteq A, A \in \mathcal{A}\}$.

\subsection{Asymmetric Trust Overview}
\label{sec:context}
We consider and expand the asymmetric trust model proposed by Alpos~\etal~\cite{DBLP:journals/dc/AlposCTZ24}.
We summarize here the necessary context for understanding this work; we refer the reader to their paper for a full presentation. In protocols with asymmetric trust, each participant is free to make its own individual trust assumptions about others, captured by an asymmetric quorum system. This contrasts with ordinary, symmetric and threshold quorum systems, where all participants share the same trust assumptions. Given a set of processes $\mathcal{P}$, an \emph{asymmetric fail-prone system} $\mathbb{F} = [\mathcal{F}_1, \dots, \mathcal{F}_n]$, where $\mathcal{F}_i$ represents the failure assumptions  of process $p_i$, captures the heterogeneous model. Each $\mathcal{F}_i$ is a collection of subsets of $\mathcal{P}$ such that some $F \in \mathcal{F}_i$ with $F \subseteq \mathcal{P}$ is called a fail-prone set for $p_i$ and contains all processes that, according to $p_i$, may at most fail together in some execution \cite{DBLP:conf/asiacrypt/DamgardDFN07}. We can, in turn, proceed to define asymmetric Byzantine quorum systems, denoted by $\mathbb{Q}$. 
\begin{definition}
\label{def:abqs}
  An \emph{asymmetric Byzantine quorum system} $\mathbb{Q}$ for $\mathds{F}$ is an array of collections of sets $\mathbb{Q} = [\mathcal{Q}_1, \cdots, \mathcal{Q}_n]$ where $\mathcal{Q}_i \subseteq 2^{\mathcal{P}}$ for $i \in [1, n]$. The set $\mathcal{Q}_i \subseteq 2^{\mathcal{P}}$ is a symmetric quorum system of $p_i$ and any set $Q_i \in \mathcal{Q}_i$ is called a quorum for $p_i$. The system $\mathbb{Q}$ must satisfy the following two properties.

\begin{description}
    \item[Consistency:] The intersection of two quorums for any two processes contains at least one process for which either process assumes that it is not faulty, i.e.,
    $$
    \forall i, j \in [1, n], \forall Q_i \in \mathcal{Q}_i, \forall Q_j \in \mathcal{Q}_j, \forall F_{ij} \in \mathcal{F}_i^* \cap \mathcal{F}_j^*: Q_i \cap Q_j \nsubseteq F_{ij}.
    $$

    \item[Availability:] For any process $p_i$ and any set of processes that may fail together according to $p_i$, there exists a disjoint quorum for $p_i$ in $\mathcal{Q}_i$, i.e.,
    $$
        \forall i \in [1, n], \forall F_i \in \mathcal{F}_i : \exists Q_i \in \mathcal{Q}_i : F_i \cap Q_i = \emptyset.
    $$
\end{description}    
\end{definition}

Given an asymmetric quorum system $\mathbb{Q}$ for $\mathbb{F}$, an \emph{asymmetric kernel system} for $\mathbb{Q}$ is defined analogously as the array $\mathbb{K} = [\mathcal{K}_1, \dots, \mathcal{K}_n]$ that consists of the kernel systems for all processes in $\mathcal{P}$. A set $K_i \in \mathcal{K}_i$ is called a \emph{kernel} for $p_i$ and for each $K_i$ it holds that $\forall Q_i \in \mathcal{Q}_i, \ K_i \cap Q_i \neq \emptyset $, that is, a kernel intersects all quorums of a process.

Given an asymmetric fail-prone system $\mathbb{F}$, there will exist a valid asymmetric quorum system for $\mathbb{F}$ if and only if $\mathbb{F}$ satisfies the $B^3$ condition~\cite{DBLP:conf/asiacrypt/DamgardDFN07, DBLP:journals/dc/AlposCTZ24}. This property is defined as follows.
\begin{definition}[$B^3$-condition]
\label{def:b3}
  An asymmetric fail-prone system \BF satisfies the
  \emph{$B^3$-condition}, abbreviated as $B^3(\BF)$, whenever it holds that
  \[
    \forall i,j \in [1,n],
    \forall F_i \in \CF_i, \forall F_j\in\CF_j,
    \forall F_{ij} \in {\CF_i}^*\cap{\CF_j}^*: \,
    \CP \not\subseteq F_i \cup F_j \cup F_{ij} 
  \]
\end{definition}

If $B^3(\mathbb{F})$ holds, then the \emph{canonical quorum system}, defined as the complement of the asymmetric fail-prone system, is a valid asymmetric quorum system. This is similar to what happens in the symmetric case, where if the fail-prone system $\mathcal{F}$ satisfies the $Q^3$ property, the existence of a valid symmetric quorum system $\mathcal{Q}$ is guaranteed~\cite{DBLP:journals/joc/HirtM00}.

During an execution, the set of processes that fail is denoted by $F$. The members of $F$ are unknown to the processes and can only be identified by an outside observer and the adversary.  A process $p_i$ correctly foresees $F$ if $F \in \mathcal{F}_i^*$, that is, $F$ is contained in one of its fail-prone sets. Based on this information, it is possible to classify processes in three categories.

\begin{description}
    \item[Faulty:] a faulty process, i.e., $p_i \in F$;
    \item[Naive:] a correct process $p_i$, i.e. $p_i \notin F$, where $F \notin \mathcal{F}^*_i$; or
    \item[Wise:] a correct process $p_i$, i.e. $p_i \notin F$, where $F \in \mathcal{F}^*_i$
\end{description}

Alpos~\etal~\cite{DBLP:journals/dc/AlposCTZ24} show that naive processes may disrupt the safety and liveness guarantees of some protocols. In order to formalize this notion, they introduced the concept of a \textit{guild} which is central for many of the algorithms they propose (e.g., reliable broadcast, binary consensus).
\begin{definition}
    
A guild is a set of wise processes that contains one quorum for each of its members. Formally, a guild $\mathcal{G}$ for $\mathds{F}$ and $\mathds{Q}$, for an execution with faulty processes $F$, is a set of processes that satisfies the following two properties.
\begin{description}
    \item[Wisdom:] $\mathcal{G}$ is a set of wise processes, that is,
    $$
    \forall p_i \in \mathcal{G}: F \in \mathcal{F}_i^*.
    $$

    \item[Closure:] $\mathcal{G}$ contains a quorum for each of its members, that is,
    $$
    \forall p_i \in \mathcal{G} : \exists Q_i \in \mathcal{Q}_i: Q_i \subseteq \mathcal{G}.
    $$
\end{description}
\end{definition}

Alpos~\etal~\cite{DBLP:journals/dc/AlposCTZ24} provide reliable broadcast and binary consensus implementations which function only in executions that contain a guild. In relation to this, Definition~\ref{def:toleratedsystem} presents the notion of a tolerated system, originally introduced by Alpos~\etal~\cite{DBLP:journals/dc/AlposCTZ24}. It is built on top of tolerated sets, which are sets that contain all elements outside of a guild. This portrays the resilience of the trust assumptions, as even when all processes in one of the tolerated sets fail, there is still a guild in the system. This structure is used by Alpos~\etal~\cite{DBLP:journals/dc/AlposCTZ24} and Zanolini~\cite{Zanolini2023} in their construction of several asymmetric protocols.

\begin{definition}[Tolerated system~\cite{DBLP:journals/dc/AlposCTZ24}]\label{def:toleratedsystem}
  Given an asymmetric Byzantine quorum system \BQ and an execution with
  faulty processes $F$, a set of processes~$T$ is called \emph{tolerated
    (by \BQ)} if a non-empty guild \CG for $F$ and \BQ exists such that
  $T = \CP \setminus \CG$.

  The \emph{tolerated system} \CT of an asymmetric Byzantine quorum system
  \BQ is the maximal collection of tolerated sets, where $F$ ranges over
  all possible executions.
\end{definition}

Lemma~\ref{lem:b3-q3}, which is crucial for the construction presented in Section~\ref{sec:degeneracy}, shows that if there is an asymmetric fail-prone system $\mathbb{F}$ such that $B^3(\mathbb{F})$ holds, then $Q^3(\CT)$ also holds. This means that $\CT$ is a valid symmetric fail-prone system~\cite{DBLP:journals/joc/HirtM00}.

\begin{lemma}[\cite{DBLP:journals/dc/AlposCTZ24}]\label{lem:b3-q3}
  Let $\mathbb{Q}$ be an asymmetric Byzantine quorum system among processes \CP with asymmetric fail-prone system
  $\mathbb{F} = \overline{\mathbb{Q}}$, i.e., such that $\mathbb{Q}$ is a canonical asymmetric Byzantine quorum system,
  and let \CT be the tolerated system of $\mathbb{Q}$. If $B^3(\mathbb{F})$ holds, then $\mathcal{T}$ is a valid symmetric fail-prone system, i.e.,  $Q^3(\CT)$ holds.
\end{lemma}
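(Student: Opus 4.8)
The plan is to first translate $Q^3(\mathcal{T})$ into a statement purely about guilds. Every element of $\mathcal{T}$ is, by Definition~\ref{def:toleratedsystem}, the complement $T = \mathcal{P}\setminus\mathcal{G}$ of some non-empty guild $\mathcal{G}$, so writing $T_k = \mathcal{P}\setminus\mathcal{G}_k$ for three tolerated sets and applying De Morgan, the required inequality $\mathcal{P}\not\subseteq T_1\cup T_2\cup T_3$ is equivalent to $\mathcal{G}_1\cap\mathcal{G}_2\cap\mathcal{G}_3\neq\emptyset$. Thus it suffices to prove that any three guilds (for possibly distinct faulty sets) have a common member.

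The key observation I would record first is that the complement of a guild is a fail-prone subset for each of its members: if $\mathcal{G}$ is a guild with $T=\mathcal{P}\setminus\mathcal{G}$ and $p\in\mathcal{G}$, then Closure yields a quorum $Q_p\in\mathcal{Q}_p$ with $Q_p\subseteq\mathcal{G}$; since $\mathbb{Q}$ is canonical, $\mathcal{P}\setminus Q_p\in\mathcal{F}_p$, and $Q_p\subseteq\mathcal{G}$ gives $T\subseteq\mathcal{P}\setminus Q_p$, whence $T\in\mathcal{F}_p^*$. I would also note that any two guilds intersect: two disjoint guilds would contain two disjoint quorums, contradicting Consistency instantiated with the common set $\emptyset\in\mathcal{F}_i^*\cap\mathcal{F}_j^*$.

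With these facts, the heart of the argument is a single application of $B^3$. Given guilds $\mathcal{G}_1,\mathcal{G}_2,\mathcal{G}_3$, use pairwise intersection to choose $p_{12}\in\mathcal{G}_1\cap\mathcal{G}_2$ and $p_{13}\in\mathcal{G}_1\cap\mathcal{G}_3$. By the complement observation, $T_1,T_2\in\mathcal{F}_{p_{12}}^*$ and $T_1,T_3\in\mathcal{F}_{p_{13}}^*$. Pick $F_{p_{12}}\in\mathcal{F}_{p_{12}}$ with $T_2\subseteq F_{p_{12}}$ and $F_{p_{13}}\in\mathcal{F}_{p_{13}}$ with $T_3\subseteq F_{p_{13}}$, and take the shared set to be $T_1\in\mathcal{F}_{p_{12}}^*\cap\mathcal{F}_{p_{13}}^*$. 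Then $B^3(\mathbb{F})$ applied to $p_{12},p_{13}$ gives $\mathcal{P}\not\subseteq F_{p_{12}}\cup F_{p_{13}}\cup T_1$, and since $T_1\cup T_2\cup T_3\subseteq F_{p_{12}}\cup F_{p_{13}}\cup T_1$, we obtain $\mathcal{P}\not\subseteq T_1\cup T_2\cup T_3$, which is exactly $Q^3(\mathcal{T})$.

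The step I expect to be the crux is this instantiation of $B^3$: the trick is to select both witnessing processes $p_{12}$ and $p_{13}$ inside the \emph{same} guild $\mathcal{G}_1$, which is what makes $T_1$ available as the shared term $F_{ij}$ that $B^3$ demands (a set lying in both $\mathcal{F}_{p_{12}}^*$ and $\mathcal{F}_{p_{13}}^*$), while $T_2$ and $T_3$ are absorbed into the two free fail-prone sets $F_i,F_j$. Getting this pairing right, rather than attempting to attach all three complements to a single process, is the entire content of the proof; everything else is bookkeeping with the canonical construction, Consistency, and De Morgan.
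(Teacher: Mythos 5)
Your proof is correct. One thing to note up front: the paper does not actually contain a proof of this lemma---it is imported as a cited result from Alpos \etal~\cite{DBLP:journals/dc/AlposCTZ24}---so there is no in-paper argument to compare against; judged on its own merits, your argument is sound and self-contained. All three ingredients check out: (i) the complement of a guild lies in $\mathcal{F}_p^*$ for every member $p$, which is exactly where canonicity of $\mathbb{Q}$ (quorums are complements of fail-prone sets) enters; (ii) any two guilds intersect, via Consistency instantiated with $F_{ij}=\emptyset$, which is legitimate because guild members must have at least one quorum and hence a nonempty fail-prone collection; and (iii) the asymmetric pairing $p_{12}\in\mathcal{G}_1\cap\mathcal{G}_2$, $p_{13}\in\mathcal{G}_1\cap\mathcal{G}_3$ lets $T_1$ play the role of the shared set $F_{ij}\in\mathcal{F}_{p_{12}}^*\cap\mathcal{F}_{p_{13}}^*$ while $T_2$ and $T_3$ are absorbed into the free sets $F_i,F_j$, after which a single application of $B^3$ and monotonicity of unions yields $\mathcal{P}\not\subseteq T_1\cup T_2\cup T_3$. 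A further point in your favor: since $\mathcal{T}$ ranges over all executions, the three guilds may correspond to three \emph{different} faulty sets, and your proof correctly uses only the Closure property of guilds, never Wisdom, so it is insensitive to this subtlety---a trap a less careful argument could fall into.
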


\subsection{Quorum-based Algorithms}
In the following sections, we will give results regarding the existence of algorithms to solve problems such as reliable broadcast and consensus. To do this, we must first define what we mean by an algorithm. We will consider all algorithms to be \textit{quorum-based algorithms}. Definition~\ref{def:qba} formalizes this notion. It is derived from the definitions used by Losa~\etal~\cite{DBLP:conf/wdag/LosaGM19} and Li~\etal~\cite{DBLP:conf/wdag/LiCL23} , but it is adapted to fit the asymmetric model of Alpos~\etal~\cite{DBLP:journals/dc/AlposCTZ24}. We use the term \emph{initiator of the execution} to refer to the sole process that triggers the execution of an algorithm. In our work, this only applies within the context of reliable broadcast, where the initiator is the process that broadcasts a value. In consensus, there is no single initiator because all processes propose a value. 

\begin{definition}[Quorum-based algorithm]
\label{def:qba}
  Given a distributed algorithm $A$, processes $\mathcal{P}$, and asymmetric quorum system $\mathbb{Q}$, we will say that $A$ is quorum-based if the following properties hold
  \begin{itemize}
      \item In any execution where a correct process $p_i$ issues a response, there is a quorum $Q_i \in \mathcal{Q}_i$ such that $p_i$ received a message from each member of $Q_i$. 
      \item If the state of a correct process $p_i$ changes upon receiving a message from some process $p_j$, then there is a $ Q_i \in \mathcal{Q}_i$ such that $p_j \in Q_i$ or $p_j$ is the initiator of the execution of $A$.
      \item For every execution where a correct process $p_i$ issues a response, there exists an execution where $p_i$ only receives messages from the members of one of its quorums $Q_i \in \mathcal{Q}_i$ and the initiator and $p_i$ issues the same response.
  \end{itemize}
\end{definition}

\section{Guild Assumptions are not Asymmetric}
\label{sec:degeneracy}

In the asymmetric trust setting, Li~\etal~\cite{DBLP:conf/wdag/LiCL23} show that a quorum system satisfying the classical properties of consistency and availability alone does not suffice to solve fundamental problems such as reliable broadcast or consensus. The heterogeneous views of the system make it necessary to assume more structure. Different approaches have reached similar conclusions about the assumptions needed to address these problems under asymmetric trust. Alpos~\etal~\cite{DBLP:journals/dc/AlposCTZ24} require the existence of a guild, while Li~\etal~\cite{DBLP:conf/wdag/LiCL23} propose that the system must satisfy the \emph{strong availability} property. This is closely related to the idea of a guild, as it also requires the existence of a well-behaved group of processes that includes a quorum for every participant. In this section, we examine the implications of these assumptions, focusing on the model proposed by Alpos~\etal~\cite{DBLP:journals/dc/AlposCTZ24}. 

Guilds are a strong assumption, essentially requiring all members to have common beliefs, which goes against the idea of asymmetric trust. We show that for any asymmetric Byzantine fail-prone system $\mathbb{F}$, if a guild exists in an execution, then we can build an equally valid symmetric trust assumption $\mathcal{F}$ from $\mathbb{F}$. This means that, in such cases, asymmetric trust reduces to symmetric trust.

In the crash-fault model, Senn and Cachin~\cite{DBLP:conf/papoc/SennC25} already show that given an asymmetric fail-prone system~$\mathbb{F}$, it is possible to construct a symmetric fail-prone system~$\mathcal{F}$ from $\mathbb{F}$ such that if all processes adopt $\mathcal{F}$ as their trust assumption, no process will be worse off. This result implies that there is no need for asymmetric trust in that setting. We extend this by showing that a similar scenario occurs in the Byzantine fault model for all executions with a guild. This effectively invalidates the advantages of asymmetric trust for the algorithms proposed by Alpos~\etal~\cite{DBLP:journals/dc/AlposCTZ24} which rely on the existence of a guild.

We now show how to construct a symmetric fail-prone system $\mathcal{F}$ from an asymmetric fail-prone system $\mathbb{F}$, such that in all executions with a guild, $\mathcal{F}$ is as expressive as $\mathbb{F}$. We will build it based on the notion of the tolerated system, presented in Section~\ref{sec:relatedwork} and used by Alpos~\etal~\cite{DBLP:journals/dc/AlposCTZ24} and Zanolini~\cite{Zanolini2023} in their consensus protocols.

\begin{definition}[Tolerated Symmetric Fail-Prone System]
\label{def:tsfps}
    Let $\mathbb{F}$ be an asymmetric fail-prone system that satisfies $B^3(\mathbb{F})$, $\mathbb{Q}$ be its associated canonical quorum system and \CT be the tolerated system derived from $\mathbb{Q}$ according to Definition~\ref{def:toleratedsystem}. We interpret $\mathcal{T}$ as a symmetric fail-prone set and denote it by $\hat{\mathcal{F}}$, that is,  
    $
    \hat{\mathcal{F}} = \mathcal{T}
    $, with a one-to-one correspondence between the tolerated sets and fail-prone sets. 
\end{definition}

From Lemma~\ref{lem:b3-q3} we know that the tolerated symmetric fail-prone system satisfies the $Q^3$ property and therefore there will be at least one valid symmetric Byzantine quorum system associated to it.

\begin{lemma}
\label{lem:guild-implies-sym}
  Let $\mathbb{F}$ be an asymmetric fail-prone system that satisfies $B^3(\mathbb{F})$, \CT be the tolerated system derived from $\mathbb{F}$ and $\hat{\mathcal{F}}$ be the tolerated symmetric fail-prone system derived according to Definition~\ref{def:tsfps}. Given an execution with a guild $\mathcal{G}$ and faulty processes $F$, then $F \in
  \hat{\mathcal{F}}$.
\end{lemma}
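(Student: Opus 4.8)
The plan is to pin down a single tolerated set that contains $F$ and then read the conclusion off the definitions, since $\hat{\mathcal{F}} = \mathcal{T}$ by Definition~\ref{def:tsfps}. The entire argument rests on one elementary observation about guilds: by hypothesis $\mathcal{G}$ is a guild, hence a set of \emph{wise} processes, and a wise process is by definition correct, i.e.\ not in $F$. Consequently $\mathcal{G} \cap F = \emptyset$, which gives the containment $F \subseteq \mathcal{P} \setminus \mathcal{G}$.

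First I would invoke the hypothesis that the execution contains a non-empty guild $\mathcal{G}$ for $F$ and $\mathbb{Q}$ (a guild must be non-empty for it to yield a tolerated set). By Definition~\ref{def:toleratedsystem}, the set $T := \mathcal{P} \setminus \mathcal{G}$ is then a tolerated set of $\mathbb{Q}$ and therefore lies in the tolerated system $\mathcal{T}$. Combining this with $F \subseteq T$ from the previous step exhibits $F$ inside a member of $\mathcal{T} = \hat{\mathcal{F}}$, which is exactly the statement that the symmetric fail-prone system $\hat{\mathcal{F}}$ foresees $F$, i.e.\ $F \in \hat{\mathcal{F}}$.

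The step requiring the most care is the passage from the containment $F \subseteq T$ to the membership $F \in \hat{\mathcal{F}}$ as written. Since $\mathcal{T}$ records tolerated sets and a symmetric fail-prone system is understood up to taking subsets (a set is foreseen when it is contained in some fail-prone set), I would make this explicit by noting that $T$ is itself a fail-prone set of $\hat{\mathcal{F}}$, so that $F \subseteq T$ already certifies that $\hat{\mathcal{F}}$ tolerates $F$; if one prefers the tightest witness, the maximal guild $\mathcal{G}_{\max} \supseteq \mathcal{G}$ is still disjoint from $F$ and yields the minimal tolerated set containing $F$. Finally, I would remark that the hypothesis $B^3(\mathbb{F})$ is not used in the containment argument itself; its sole role, via Lemma~\ref{lem:b3-q3}, is to guarantee that $\hat{\mathcal{F}}$ is a legitimate symmetric fail-prone system, so that the conclusion $F \in \hat{\mathcal{F}}$ is well defined.
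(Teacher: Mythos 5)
Your proof is correct and takes essentially the same route as the paper's: both identify the tolerated set $T = \mathcal{P}\setminus\mathcal{G} \in \mathcal{T} = \hat{\mathcal{F}}$, use $F \cap \mathcal{G} = \emptyset$ to obtain $F \subseteq T$, and read off the conclusion. Your explicit treatment of the subset-closure point (that $F \subseteq T$ is what it means for $\hat{\mathcal{F}}$ to foresee $F$, strictly $F \in \hat{\mathcal{F}}^*$), of the guild's non-emptiness, and of the purely definitional role of $B^3(\mathbb{F})$ merely makes precise what the paper's proof leaves implicit.
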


\begin{proof}
    By Definition~\ref{def:toleratedsystem}, there exists a tolerated set $T = \mathcal{P} \setminus \mathcal{G}$ such that $T \in \mathcal{T}$. Since $F \cap \mathcal{G} = \emptyset$ we know that $F \subseteq T$ and consequently there exists a fail-prone set $F' \in \hat{\mathcal{F}}$ such that $T=F'$. Therefore, the set of faulty processes will be considered by the fail-prone system $\hat{\mathcal{F}}$.
\end{proof}

\begin{theorem}
\label{th:guild-implies-sym}
  Let $\mathbb{F}$ be an asymmetric fail-prone system that satisfies $B^3(\mathbb{F})$. Given an execution with faulty processes $F$ and at least one guild $\mathcal{G}$, it is possible to derive a symmetric fail-prone system $\mathcal{F}$  from $\mathbb{F}$ such that $F \in \mathcal{F}^*$.
\end{theorem}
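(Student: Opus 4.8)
The plan is to take for $\mathcal{F}$ precisely the tolerated symmetric fail-prone system $\hat{\mathcal{F}}$ already constructed in Definition~\ref{def:tsfps}, so that no new construction is required beyond recognizing that everything has been set up. Concretely, I would let $\mathbb{Q} = \overline{\mathbb{F}}$ be the canonical asymmetric quorum system associated with $\mathbb{F}$ (which exists because $B^3(\mathbb{F})$ holds), let $\mathcal{T}$ be its tolerated system as in Definition~\ref{def:toleratedsystem}, and set $\mathcal{F} := \hat{\mathcal{F}} = \mathcal{T}$. The theorem then splits into two independent claims: that $\mathcal{F}$ is a legitimate symmetric fail-prone system, and that the actual faulty set $F$ is foreseen by it, i.e.\ $F \in \mathcal{F}^*$.

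For the first claim, I would invoke Lemma~\ref{lem:b3-q3}: since $B^3(\mathbb{F})$ holds, we obtain $Q^3(\mathcal{T})$, and because $\mathcal{F} = \mathcal{T}$ by construction this gives $Q^3(\mathcal{F})$. By the classical fact that $Q^3$ suffices for the existence of a symmetric Byzantine quorum system (as noted after Definition~\ref{def:b3}), $\mathcal{F}$ is a well-formed symmetric trust assumption with at least one associated valid quorum system, so no process is left without a quorum.

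For the second claim, I would appeal directly to Lemma~\ref{lem:guild-implies-sym}. Whenever an execution contains a guild $\mathcal{G}$, its proof shows that the tolerated set $T = \mathcal{P} \setminus \mathcal{G}$ lies in $\mathcal{T}$ (by Definition~\ref{def:toleratedsystem}) and that $F \subseteq T$, because $F \cap \mathcal{G} = \emptyset$. Reinterpreting $T$ as the fail-prone set $F' \in \hat{\mathcal{F}}$ with $F \subseteq F'$ yields $F \in \hat{\mathcal{F}}^* = \mathcal{F}^*$, which is exactly the desired conclusion.

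Because the genuine work has been front-loaded into Definition~\ref{def:tsfps} and Lemmas~\ref{lem:b3-q3} and~\ref{lem:guild-implies-sym}, I do not expect a real obstacle; the theorem is essentially the packaging of those pieces. The one point I would be careful about is the distinction between membership in $\hat{\mathcal{F}}$ and in $\hat{\mathcal{F}}^*$: Lemma~\ref{lem:guild-implies-sym} delivers only $F \subseteq F'$ for some $F' \in \hat{\mathcal{F}}$, i.e.\ $F \in \hat{\mathcal{F}}^*$, and it is this downward-closed version that the theorem needs. I would therefore state the conclusion as $F \in \mathcal{F}^*$ and make sure the guild hypothesis is used in exactly one place, namely to guarantee that $\mathcal{P} \setminus \mathcal{G}$ is a tolerated set containing $F$.
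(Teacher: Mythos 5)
Your proposal is correct and follows essentially the same route as the paper's own proof: take $\mathcal{F} = \hat{\mathcal{F}} = \mathcal{T}$ from Definition~\ref{def:tsfps}, use Lemma~\ref{lem:b3-q3} (via $B^3(\mathbb{F}) \Rightarrow Q^3(\hat{\mathcal{F}})$) for validity, and use Lemma~\ref{lem:guild-implies-sym} to conclude $F \in \hat{\mathcal{F}}^*$. Your care about the distinction between membership in $\hat{\mathcal{F}}$ versus $\hat{\mathcal{F}}^*$ is well placed and matches how the paper itself invokes the lemma.
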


\begin{proof}
    We first identify the tolerated symmetric fail-prone system $\hat{\mathcal{F}}$ according to Definition~\ref{def:tsfps}. Since $B^3(\mathbb{F})$ we know that $Q^3(\hat{\mathcal{F}})$ and therefore it is a valid symmetric Byzantine fail-prone system. Applying Lemma~\ref{lem:guild-implies-sym}, we know that if we are in an execution with a guild $\mathcal{G}$ and faulty processes $F$, then $F \in \hat{\mathcal{F}}^*$. Therefore, by taking $\mathcal{F} = \hat{\mathcal{F}}$ it is possible to construct a symmetric fail-prone system $\mathcal{F}$ derived from $\mathbb{F}$ such that $F \in \mathcal{F}^*$.
\end{proof}

Theorem~\ref{th:guild-implies-sym} shows that for executions with a guild, every process is at least as well off using the symmetric quorum system $\mathcal{F}$ instead of $\mathbb{F}$. For a wise process $p_i$ the situation will remain the same regardless of the quorum system chosen. With the asymmetric system $F \in \mathcal{F}^*_i$ will hold, while with the symmetric alternative $F \in \mathcal{F}^*$. For naive processes the situation improves, as now the faulty processes will be considered in their trust assumption. Thus, there is no reason for a process not to adopt the derived symmetric fail-prone system $\mathcal{F}$. Building $\mathcal{F}$ can be computationally expensive because it requires computing $\mathcal{T}$. However, existing asymmetric consensus protocols~\cite{DBLP:journals/dc/AlposCTZ24, Zanolini2023} also require computing $\mathcal{T}$, so our approach is no more costly than current solutions.

Hence, if a guild is needed to solve a problem in the asymmetric setting, there exists a way to circumvent this and use any existing symmetric algorithm while obtaining the same guarantees that an asymmetric algorithm would provide. Therefore, if a problem can only be solved if a guild exists, the interest in using asymmetric trust to solve it decreases. This motivates the search for better algorithms for problems such as reliable broadcast and consensus, where the only known asymmetric algorithms require a guild. 

One limitation of this result is that it assumes knowledge of the trust assumptions for all processes ($\mathbb{F}$). Although this assumption holds in the Damgård \etal~\cite{DBLP:conf/asiacrypt/DamgardDFN07} and Alpos~\etal~\cite{DBLP:journals/dc/AlposCTZ24} models, it is not required in other asymmetric trust models~\cite{DBLP:conf/wdag/LosaGM19}. We also note that knowledge of the faulty processes $F$ is not required to derive the symmetric fail-prone system presented in Theorem~\ref{th:guild-implies-sym}.

\section{Depth}
\label{sec:depth}
In the symmetric world, processes are classified as correct or faulty. This suffices to express the properties of algorithms and capture all potential states of a process. In the asymmetric world this is not enough, a process might be correct but have incorrect trust assumptions. Damgård~\etal~\cite{DBLP:conf/asiacrypt/DamgardDFN07} and Alpos~\etal~\cite{DBLP:journals/dc/AlposCTZ24} address this problem by introducing the notions of naive, wise, and guilds.

Existing protocols for reliable broadcast and consensus in the Alpos~\etal~\cite{DBLP:journals/dc/AlposCTZ24} model require a guild to ensure safety and liveness. The results of Section~\ref{sec:degeneracy}, coupled with the inherent rigidity of having this kind of assumptions, motivate the search for new asymmetric algorithms that forego such assumptions. In order to do this, we introduce a new way to describe the \textit{quality} of processes, based on how much they can rely on other processes during an execution. We denote this concept as the \emph{depth} of a process. We then show how it can be used to characterize asymmetric problems in a more fine-grained manner. Definition~\ref{def:depth-process} introduces this notion. This approach allows to generalize the concept of \textit{wise process} or \textit{guild member} and introduces more expressivity to the system.

\begin{definition}[Depth of a process]
\label{def:depth-process}
    For an execution, we recursively define the notion of a correct process having depth $d$ as follows
    \begin{itemize}

        \item Any correct process $p_i$ has depth 0. 
    
        \item Additionally, a correct process $p_i$ has depth $d\geq 1$ if it has a quorum such that all processes contained in it have depth at least $d-1$, i.e.,
        $$
        \exists Q \in \mathcal{Q}_i, \forall p_j \in Q: \text{ $p_j$ is correct, has depth $s$, and $s\geq d-1$}
        $$
    \end{itemize}
\end{definition}

Note that a process with maximal depth $d$ also has depth $d'$ for all $0\leq d'\leq d$. We will focus on the \emph{maximal depth} of a process. In the following sections, any reference to the depth of a process should be understood as its maximal depth. Following the terminology of Damgård~\etal~\cite{DBLP:conf/asiacrypt/DamgardDFN07}, naive processes have depth 0, wise processes have depth at least 1, and processes in a guild have infinite depth.

In Sections~\ref{sec:rb}~and~\ref{sec:consensus} we show how the properties of protocols can be adapted to take into account the depth of processes. We modify them to show that some property must hold for all processes with depth at least $d$. For example, the \textit{totality} property of reliable broadcast is updated to state that all processes with depth $d$ eventually deliver a message.

\section{Reliable Broadcast}
\label{sec:rb}

Section~\ref{sec:degeneracy} shows that if an execution includes a guild, we can convert asymmetric trust assumptions into symmetric ones. This means that if a problem can only be solved in executions with a guild, the interest in using asymmetric trust to solve it decreases. So far, the only known protocols for reliable broadcast and consensus in the Alpos~\etal~\cite{DBLP:journals/dc/AlposCTZ24} model require the existence of a guild.  This motivates the search for new algorithms that work without relying on guild assumptions for these problems. We start this search with the Byzantine reliable broadcast problem. Our approach weakens the trust assumptions needed and eliminates the need for a guild. Definition~\ref{def:arbc} presents a way to characterize reliable broadcast based on the depth of processes. Its properties are specific to processes that have at least a certain maximal depth.  %~$\geq d$. 

\begin{definition}[Depth-characterized asymmetric Byzantine reliable broadcast]\label{def:arbc}
  A protocol for \emph{asymmetric Byzantine reliable broadcast} with sender~$p_s$ and depth~$d$, shortened as \op{RB[$d$]}, defined through the events \textit{dar-broadcast(m)} and \textit{dar-deliver(m)} satisfies the following properties:
\begin{itemize}
    \item  \textbf{Validity: }If a correct process $p_s$
  \op{dar-broadcasts} a message~$m$, then all processes with depth $d$
  eventually \op{dar-deliver}~$m$.
  \item \textbf{Consistency: }   If some process with depth $d$ \textit{dar-delivers} $m$ and another process with depth $d$ \textit{dar-delivers} $m'$, then $m=m'$.
    \item \textbf{Integrity: }  Every process with depth $d$ \textit{dar-delivers} $m$ at most once. Moreover, if the sender $p_s$ is correct and the receiver has depth $d$, then $m$ was previously \textit{dar-broadcast} by $p_s$.
    \item \textbf{Totality:} If a process with depth $d$
  \op{dar-delivers} some message, then all processes with depth $d$
  eventually \op{dar-deliver} a message.
\end{itemize}
\end{definition}

The reliable broadcast proposed by Alpos~\etal~\cite{DBLP:journals/dc/AlposCTZ24} is closely related to \op{RB[$\infty$]}, although not as general. \op{RB[$\infty$]} guarantees a solution for all processes with depth $\infty$, whereas the protocol by Alpos~\etal~\cite{DBLP:journals/dc/AlposCTZ24}  only guarantees a solution for those processes with depth $\infty$ that also belong to the maximal guild. That is, processes with depth $\infty$ that are members of a guild, but not of the maximal guild, are not guaranteed to receive a solution.

A natural question arises, which is the minimum depth needed to implement the different primitives. A low depth not only corresponds to weaker asymmetric assumptions, but also reflects, to a certain level, the communication complexity of an algorithm.

If an algorithm solves RB[$s$] then it also solves \op{RB[$s'$]} for all $s'\geq s$. This simplifies the search for a solution by reducing it to finding an algorithm that works for the minimal value of $s$. Additionally, there are no algorithms that can solve \op{RB[1]}, therefore, we must search for a protocol that solves the problem for processes with a depth of at least 2. These results are proven in Lemmas~\ref{lem:solve-s-solve-s'}~and~\ref{lem:norb1}.

Algorithm~\ref{alg:rb3} presents a solution for \op{RB[3]}. Every process waits to receive a quorum of \str{Readyafterecho} messages associated to the same message $m$ and round $r$ before delivering $m$. A process with depth 3 will only receive such a quorum if there exists a set of processes~$Q$ (which form a quorum for a wise process) that can attest that the sender indeed sent the value $m$. Since there will be an attesting quorum for every process that delivers, and since quorums for wise processes intersect in at least one correct process, we can deduce that all processes with depth 3 that deliver a value $m$ will deliver the same value. In addition, we use a technique similar to Bracha's kernel amplification~\cite{DBLP:journals/jacm/BrachaT85} to ensure that if a valid process delivers then all valid processes will deliver. The arrays in lines~\ref{line:srae}, ~\ref{line:srar}, ~\ref{line:rae}, and~\ref{line:rar} are hashmaps, so even though they are depicted as having infinite size they are actually sparsely populated.  The algorithm has a latency of 3 asynchronous rounds in the good case and 5 in the bad case. Its correctness is proved in Theorem~\ref{lem:rb3correct}. The existence of an algorithm to solve \op{RB[2]} remains an open question. 

\begin{algo*}[h!]
\vbox{
\small
\begin{numbertabbing}\reset
  xxxx\=xxxx\=xxxx\=xxxx\=xxxx\=xxxx\=MMMMMMMMMMMMMMMMMMM\=\kill
  \textbf{State} \label{}\\
  \> \(\var{sentecho} \gets \false\): indicates whether $p_i$ has sent \str{echo} \label{}\\
  \> \(\var{echos} \gets [\bot]^{n}\): collects the received \str{echo} messages from other processes \label{}\\
  \> \parbox[t]{0.87\linewidth}{ \(\var{sentrae} \gets [\false]^\infty\): indicates whether $p_i$ has sent \str{readyafterecho} in round n, a hashmap} \label{line:srae}\\
  \> \parbox[t]{0.87\linewidth}{ \(\var{sentrar} \gets [\false]^\infty\): indicates whether $p_i$ has sent \str{readyafterready} in round n, a hashmap} \label{line:srar}\\
  \> \parbox[t]{0.87\linewidth}{\(\var{readysafterecho} \gets [\bot, \bot]^{\infty\times n}\): collects \str{readyafterecho} messages from other processes, a hashmap} \label{line:rae}\\
  \> \parbox[t]{0.87\linewidth}{\(\var{readysafterready} \gets [\bot, \bot]^{\infty\times n}\): collects \str{readyafterready} messages from other processes, a hashmap} \label{line:rar}\\
  \> \(\var{delivered} \gets \false\): indicates whether $p_i$ has delivered a message\label{}\\
  \\
  \textbf{upon invocation} \(\op{dar-broadcast}(m)\) \textbf{do} \label{}\\
  \> send message \msg{send}{m} to all \(p_j \in \CP\) \label{}\\
  \\
  \textbf{upon} receiving a message \msg{send}{m} from $p_s$
    \textbf{such that} \(\neg \var{sentecho}\) \textbf{do} \label{}\\
  \> \(\var{sentecho} \gets \true\) \label{}\\
  \> send message \msg{echo}{m} to all \(p_j \in \CP\)\label{} \\
  \\
  \textbf{upon} receiving a message \msg{echo}{m} from \(p_j\) \textbf{do} \label{}\\
  \> \textbf{if} \(\var{echos}[j] = \bot\) \textbf{then} \label{}\\
  \> \> \(\var{echos}[j] \gets m\) \label{}\\
  \\
  \parbox[t]{0.87\linewidth}{ \textbf{upon exists} \(m \not= \bot \) \textbf{such that}
     \( \{p_j \in \CP | \var{echos}[j] = m\} \in \CQ_i \)
     \textbf{and}  \(\neg \var{sentrae[1]}\) \textbf{do}
     \` // a quorum for $p_i$} \label{}\\
  \> \(\var{sentrae[1]} \gets \true\) \label{}\\
  \> send message \msg{readyafterecho, 1}{m} to all $p_j \in \CP$ \label{}\\
  \\
  \parbox[t]{0.87\linewidth}{ \textbf{upon exists} \(m \not= \bot, r>0 \) \textbf{such that}
     \( \{p_j \in \CP | \var{readysafterecho}[r][j] = m\} \in \CK_i \)
     \textbf{and} \(\neg \var{sentrar[r]}\) \textbf{do}}
     \label{} \\
  \> \(\var{sentrar[n]} \gets \true\) \label{}\\
  \> send message \msg{readyafterready}{r, m} to all $p_j \in \CP$ \label{}\\
  \\
  \textbf{upon} receiving a message \msg{readyafterecho}{r, m} from \(p_j\)
     \textbf{do} \label{}\\
  \> \textbf{if} \(\var{readysafterecho}[r][j] = \bot\) \textbf{then} \label{}\\
  \> \> \(\var{readysafterecho}[r][j] \gets m\) \label{}\\
  \\

  \textbf{upon} receiving a message \msg{readyafterready}{r, m} from \(p_j\)
     \textbf{do} \label{}\\
  \> \textbf{if} \(\var{readysafterready}[r][j] = \bot\) \textbf{then} \label{}\\
  \> \> \(\var{readysafterready}[r][j] \gets m\) \label{}\\
  \\

  \parbox[t]{0.87\linewidth}{ \textbf{upon exists} \(m \not= \bot, r>1 \) \textbf{such that}
     \( \{p_j \in \CP | \var{readyafterready}[r][j] = m\} \in \CQ_i \)
     \textbf{and}  \(\neg \var{sentrae[r]}\) \textbf{do}}
    \label{}\\
  \> \(\var{sentrae[r+1]} \gets \true\) \label{}\\
  \> send message \msg{readyafterecho}{r+1, m} to all $p_j \in \CP$ \label{line:sendrae}\\
  \\

  \parbox[t]{0.87\linewidth}{ \textbf{upon exists} \(m \not= \bot, r>0 \) \textbf{such that} \( \{ p_j \in \CP | \var{readysafterecho}[r][j] = m\} \in \CQ_i \)  \textbf{and} \(\neg \var{delivered}\) \textbf{do}} \label{}\\
  \> \(\var{delivered} \gets \true\) \label{}\\
  \> \textbf{output} \(\op{dar-deliver(m)}\)\label{} \\ [-5ex]
\end{numbertabbing}
}
\caption{Depth 3 asymmetric reliable broadcast with sender~$p_s$ (\op{RB[3]})
  (process~$p_i$)}
\label{alg:rb3}
\end{algo*}

\begin{lemma}
    In any execution with at least one process $p_3$ with depth 3 and for any $n>1$, if a process $p_i$ with depth 1 sends a \msg{Readyafterecho}{n, m} message then there exists another process $p_j$ with depth 1 that sent a \msg{Readyafterecho}{n-1, m} message.
    \label{lem:d1echo-d1echo}
\end{lemma}
\begin{proof}
    According to the code of Algorithm~\ref{alg:rb3}, $p_i$ sends a \msg{Readyafterecho}{n, m} message after receiving \msg{Readyafterready}{n-1, m} messages from one of its quorums $Q_i \in \mathcal{Q}_i$. Process $p_3$ has a quorum $Q_2 \in \mathcal{Q}_3$ such that all its members have depth at least 2. By the quorum consistency property there is at least one process $p_2 \in Q_i \cap Q_2$. Therefore, $p_2$ sent a \msg{Readyafterready}{n-1, m} message. 

    According to Algorithm~\ref{alg:rb3}, $p_2$ does this  after receiving \msg{Readyafterecho}{n-1, m} messages from a kernel $K_2 \in \mathcal{K}_2$. Since $p_2$ has depth 2, it has at least one quorum where all its members have depth 1, and therefore each of its kernels contains at least one process with depth 1. Let $p_j$ be such a process in $K_2$. Since $p_j$ has depth 1 and it sent a \msg{Readyafterecho}{n-1, m} message we have proven the lemma. 
\end{proof}

\begin{lemma}
If a process $p_i$ with depth 1 sends a \msg{Readyafterecho}{1, m} message then $\exists Q_i \in \mathcal{Q}_i$  such that all processes in $Q_i$ sent a \msg{Echo}{m} message.
\label{lem:d1echo}
\end{lemma}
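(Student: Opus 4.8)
The plan is to trace, inside Algorithm~\ref{alg:rb3}, the single code path that can cause a correct process to emit a round-$1$ \str{readyafterecho} message, and then read off the quorum of echoes that must have preceded it. Because depth is defined only for correct processes (Definition~\ref{def:depth-process}), a process $p_i$ with depth $1$ is correct and therefore faithfully executes the protocol; in particular, whenever $p_i$ sends a message it does so only in response to one of the guarded \textbf{upon} clauses. This correctness is the only role the depth-$1$ hypothesis plays.

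First I would argue that the \emph{only} event that makes $p_i$ send a $\msg{Readyafterecho}{1, m}$ message is the clause guarded by $\neg\var{sentrae[1]}$. The other \str{readyafterecho}-emitting clause (the amplification step at line~\ref{line:sendrae}) fires only for some round $r>1$ and emits a message tagged with round index $r+1$, which is strictly greater than $1$; hence it can never produce a round-$1$ message. Thus the round index $1$ in $p_i$'s message pins down the emitting clause uniquely.

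Next I would unfold the guard of that clause. It fires precisely when there exists $m\neq\bot$ with $\{p_j\in\CP \mid \var{echos}[j]=m\}\in\CQ_i$; denote this set by $Q_i$, which is by definition a quorum for $p_i$. For each $p_j\in Q_i$ we have $\var{echos}[j]=m$, and the entry $\var{echos}[j]$ is assigned the value $m$ only inside the handler for a received $\msg{Echo}{m}$ message from $p_j$. Hence $p_i$ received $\msg{Echo}{m}$ from every $p_j\in Q_i$. Since point-to-point links are authenticated and the claim only asserts that each such process \emph{sent} $\msg{Echo}{m}$ (regardless of whether it is correct or Byzantine), taking this $Q_i$ yields a quorum in $\CQ_i$ all of whose members sent $\msg{Echo}{m}$, as required.

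The step I expect to need the most care is the uniqueness argument in the second paragraph: one must verify that no alternate path, in particular the amplification clause at line~\ref{line:sendrae}, can ever produce a message carrying round index $1$, and that the $\var{echos}$ hashmap has a single assignment site, namely the \str{echo} handler. Both facts follow from a careful reading of the round bookkeeping (every amplification round exceeds $1$) and of the code for updating $\var{echos}$, after which the lemma is immediate.
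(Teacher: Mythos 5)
Your proof is correct and follows essentially the same route as the paper's: the paper's proof is a one-line observation that, by the code of Algorithm~\ref{alg:rb3}, a correct process only sends \msg{readyafterecho, 1}{m} after receiving \msg{echo}{m} from one of its quorums, which is exactly your argument spelled out in more detail (ruling out the amplification clause and invoking authenticated links).
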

\begin{proof}
    According to Algorithm~\ref{alg:rb3}, a correct process only sends a \msg{Readyafterecho}{1, m} message after receiving \msg{Echo}{m} messages from one of its quorums $Q_i \in \mathcal{Q}_i$. This proves the lemma. 
\end{proof}

\begin{lemma}
    If a process $p_i$ with depth 3 delivers a value $m$ then there is a process $p_j$ with depth 1  such that $\exists Q_j \in \mathcal{Q}_j$  such that all processes in $Q_j$ sent a \msg{Echo}{m} message.
    \label{lem:d3-d1}
\end{lemma}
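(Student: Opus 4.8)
The plan is to trace the causal chain of messages backwards from the delivery event all the way to the original echoes, using the two preceding lemmas as the descent machinery. First I would unfold the delivery rule of Algorithm~\ref{alg:rb3}: a process $p_i$ of depth $3$ outputs \op{dar-deliver}$(m)$ only after receiving \msg{readyafterecho}{r, m} messages from a full quorum $Q_i \in \mathcal{Q}_i$, for some round $r > 0$. The senders populating $Q_i$ could a priori be Byzantine, so the immediate task is to extract from $Q_i$ a sender that is not merely correct but actually has depth at least $1$, since Lemmas~\ref{lem:d1echo-d1echo} and~\ref{lem:d1echo} only speak about depth-$1$ processes.

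To do this I would exploit the depth-$3$ hypothesis directly. Being of depth $3$, $p_i$ owns a quorum $Q_a \in \mathcal{Q}_i$ all of whose members are correct and of depth at least $2$ (Definition~\ref{def:depth-process}). Since $p_i$ has depth at least $1$ it is wise, so $F \in \mathcal{F}_i^*$, and consistency applied to the two quorums $Q_i$ and $Q_a$, both in $\mathcal{Q}_i$ (taking $j = i$ and $F_{ii} = F \in \mathcal{F}_i^* \cap \mathcal{F}_i^*$), yields $Q_i \cap Q_a \nsubseteq F$. Hence there exists a correct process $p_k \in Q_i \cap Q_a$; from $p_k \in Q_a$ it has depth at least $2$, and from $p_k \in Q_i$ together with its correctness it genuinely sent \msg{readyafterecho}{r, m}. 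This is the hand-off point: we now possess a correct depth-$1$ (in fact depth-$\geq 2$) process that emitted a ready-after-echo message in round $r$.

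Next I would run a descending induction on the round index $r$. The inductive claim is that if some depth-$1$ process sends \msg{readyafterecho}{r, m}, then some depth-$1$ process sends \msg{readyafterecho}{1, m}. When $r > 1$, the execution contains the depth-$3$ process $p_i$, which is exactly the standing hypothesis of Lemma~\ref{lem:d1echo-d1echo}; that lemma then supplies a depth-$1$ process sending \msg{readyafterecho}{r-1, m}, and I iterate until the round index reaches $1$. Applying Lemma~\ref{lem:d1echo} to the depth-$1$ process that sent \msg{readyafterecho}{1, m} produces a quorum $Q_j \in \mathcal{Q}_j$ all of whose members sent \msg{echo}{m}, which is the desired conclusion.

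The step I expect to be the main obstacle is the hand-off in the second paragraph: guaranteeing that the traced-back sender has positive depth rather than being merely correct. A naive or Byzantine member of $Q_i$ could have emitted the round-$r$ message without possessing any depth, which would block the application of the descent lemmas; it is precisely the intersection with the depth-$\geq 2$ quorum $Q_a$ furnished by the depth-$3$ assumption (which also guarantees the execution contains a depth-$3$ process, as required by Lemma~\ref{lem:d1echo-d1echo}) that rules this out. Everything afterwards is a routine descending induction feeding into Lemma~\ref{lem:d1echo}.
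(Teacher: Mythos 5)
Your proposal is correct and follows essentially the same route as the paper's proof: use the depth-$3$ definition to obtain a quorum of depth-$\geq 2$ processes, intersect it with the delivery quorum to extract a correct depth-$\geq 2$ sender of \msg{readyafterecho}{r, m}, descend round by round via Lemma~\ref{lem:d1echo-d1echo} (noting $p_i$ itself supplies the required depth-$3$ process), and finish with Lemma~\ref{lem:d1echo}. The only cosmetic difference is that you justify the non-empty intersection explicitly through quorum consistency, whereas the paper asserts it directly.
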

\begin{proof}
    Let $p_i$ be a process with depth 3 that delivered a value $m$, and $Q_2 \in \mathcal{Q}_i$ be the quorum of $p_i$ that contains only processes with depth at least 2. Its existence follows from the definition of depth.  In order to deliver the value $p_i$ received \msg{Readyafterecho}{n, m} messages from all processes in one of its quorums $Q_i \in \mathcal{Q}_i$ and some value $n$. Since $Q_2 \cap Q_i \neq \emptyset$ there is at least one process $p_j''$ with depth 2 that sent a \msg{Readyafterecho}{n, m} message. 

    Since a process with depth 2 also has depth 1, we can apply Lemma~\ref{lem:d1echo-d1echo} to obtain that there is another process $p_j'$ with depth 1 that sent a \msg{Readyafterecho}{n-1, m} message. By repeatedly applying Lemma~\ref{lem:d1echo-d1echo} we reach a a process $p_j$ with depth 1 that sent a \msg{Readyafterecho}{1, m} message. We now apply Lemma~\ref{lem:d1echo} to obtain that $\exists Q_j \in \mathcal{Q}_j$  such that all processes in $Q_j$ sent a \msg{Echo}{m} message. This proves the lemma. 
\end{proof}

\begin{lemma}
    If a quorum-based algorithm $A$ solves \op{RB[d]} then it solves \op{RB[d$'$]} for all $\var{d'}>\var{d}$
    \label{lem:solve-s-solve-s'}
\end{lemma}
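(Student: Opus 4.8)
The plan is to exploit a single structural fact: when $d' > d$, every process that has depth $d'$ in a given execution also has depth $d$ in that same execution. This is exactly the monotonicity observation recorded right after Definition~\ref{def:depth-process}, that a process with maximal depth $D$ has depth $d''$ for every $0 \le d'' \le D$. Fixing an arbitrary execution of $A$ and writing $S_d$ for its set of processes with depth $d$ and $S_{d'}$ for its set of processes with depth $d'$, this gives $S_{d'} \subseteq S_d$. The entire argument then reduces to checking that each of the four properties of \op{RB[$d$]} from Definition~\ref{def:arbc}, which hold for $S_d$ because $A$ solves \op{RB[$d$]}, is inherited by the smaller set $S_{d'}$.

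First I would fix the execution and an arbitrary $d' > d$ and establish the containment $S_{d'} \subseteq S_d$ as above. I would then transfer the properties one at a time. Validity, Integrity, and Consistency are (pairwise-)universally quantified over the relevant depth class, so restricting the quantifier from $S_d$ to the subset $S_{d'}$ preserves them directly; in particular, the Validity conclusion ``all processes in $S_d$ eventually \op{dar-deliver} $m$'' is strictly stronger than the corresponding statement for $S_{d'}$. Totality requires one extra step: if some process $p \in S_{d'}$ \op{dar-delivers} a message, then since $p \in S_d$ the Totality of \op{RB[$d$]} guarantees that every process in $S_d$ eventually \op{dar-delivers}, and because $S_{d'} \subseteq S_d$ this covers every process in $S_{d'}$ as well. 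Since this holds in every execution, $A$ satisfies all four properties of \op{RB[$d'$]} and hence solves it.

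I expect the only genuine step to be making the containment $S_{d'} \subseteq S_d$ precise from the recursive Definition~\ref{def:depth-process}; once that is in hand, each property transfers by elementary quantifier reasoning. It is worth noting that the quorum-based hypothesis is not actually used in the transfer: it serves only to fix the class of algorithms under discussion, and the monotonicity is purely a consequence of the fact that the depth classes are nested.
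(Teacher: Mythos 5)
Your proof is correct and follows essentially the same route as the paper's: both arguments rest on the single containment that every process of depth $d'$ also has depth $d$ (so the depth-$d'$ processes form a subset of the depth-$d$ ones), and then transfer the four properties of Definition~\ref{def:arbc} to the smaller set. The paper's own proof is just a terser version of yours; your explicit property-by-property check (including the slightly non-trivial Totality step) and the remark that the quorum-based hypothesis plays no role are sound elaborations, not deviations.
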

\begin{proof}
    Suppose that an algorithm $A$ solves \op{RB[d]}. Denote by $S$ the set of all processes that have depth \var{d} and by $S'$ the set of all processes that have depth \var{d'}. Since $S'\subseteq S$ then all properties of \op{RB[d$'$]} will be satisfied for the processes in $S'$ after executing $A$. Therefore $A$ also solves \op{RB[d$'$]}.
    
\end{proof}

\begin{lemma}
    There is no quorum-based algorithm that solves \op{RB[1]} 
    \label{lem:norb1}
\end{lemma}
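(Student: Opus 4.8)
The plan is to argue by contradiction: assume some quorum-based algorithm $A$ (in the sense of Definition~\ref{def:qba}) solves \op{RB[1]} for every asymmetric Byzantine quorum system, and then exhibit a single system $\BQ$, satisfying only consistency and availability, together with an execution on which $A$ necessarily violates one of the four properties of Definition~\ref{def:arbc}. The system will be tailored so that it contains two \emph{wise} processes $p_b$ and $p_c$ (both of depth at least $1$, hence both covered by \op{RB[1]}), where $p_c$ has depth exactly $1$: every all-correct quorum of $p_c$ is composed of \emph{naive} processes. By Definition~\ref{def:depth-process} this is precisely the feature distinguishing depth $1$ from depth $2$ — the processes $p_c$ relies on are correct but have no all-correct quorum, so each of their own quorums contains a faulty process under the adversary's control.

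The contradiction will target \textbf{Totality}, which escapes the automatic protection granted by quorum consistency. (A direct \textbf{Consistency} violation between two wise processes is unavailable: if both foresee the faulty set $F$, then $F\in\CF_b^*\cap\CF_c^*$, and the consistency axiom of Definition~\ref{def:abqs} forces a correct witness into $Q_b\cap Q_c$ whose messages are identical to both, ruling out a clean split. Totality concerns the \emph{non}-delivery of a single process and sidesteps this.) First I fix the canonical execution $E_1$ in which the sender is correct and \op{dar-broadcasts}~$m$; by \textbf{Validity} $p_c$ must eventually \op{dar-deliver}~$m$, and by the isolation property (third item of Definition~\ref{def:qba}) I may assume $p_c$ reaches this response while hearing only from the members of one all-correct quorum $Q_c$ and from the sender. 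I then build a twin execution $E_3$ with a \emph{faulty} sender that injects $m$ only into $p_b$'s region, so that $p_b$ \op{dar-delivers}~$m$; toward $p_c$, the faulty processes lodged inside the naive members of $Q_c$ feed those members exactly the inputs they would see in a run where nothing reaches $p_c$ (the second property of Definition~\ref{def:qba} guarantees a naive member's state moves only on messages from its own quorums or the initiator, all of which the adversary can reproduce), while asynchrony lets me delay every message from $p_b$'s region to $p_c$. Then $p_c$'s local view in $E_3$ is indistinguishable from a run $E_0$ with a silent faulty sender in which no process delivers, so $p_c$ must behave identically and \emph{not} deliver in $E_3$. Since the wise process $p_b$ delivered while the wise process $p_c$ never does, \textbf{Totality} fails in $E_3$, contradicting that $A$ solves \op{RB[1]}.

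Conceptually this is a dichotomy on how $A$ treats the depth-$1$ process $p_c$: either $p_c$ commits on the strength of its naive quorum alone, in which case the silent-sender twin $E_0$ turns that commitment into a spurious delivery and breaks \textbf{Totality} (or \textbf{Consistency} against a conflicting wise process); or $p_c$ withholds delivery until its naive quorum members corroborate, in which case the adversary keeps those members blocked forever and $p_c$ never delivers even though $p_b$ does. Depth $2$ escapes both horns precisely because there the members of $p_c$'s quorum are themselves wise, own all-correct quorums, and complete their corroboration without adversarial help — which is why the search continues at \op{RB[2]}.

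The main obstacle is the explicit construction of $\BQ$, and in particular making the indistinguishability robust against amplification. I must choose the fail-prone systems so that (i) $\BQ$ is valid (consistency and availability of Definition~\ref{def:abqs} hold, e.g.\ by taking the canonical system of a $B^3$ fail-prone system), (ii) $p_c$ is wise yet of depth exactly $1$, and (iii) crucially, the set of correct processes supporting $m$ in $E_3$ (that is, $p_b$'s region) does \emph{not} form a kernel for $p_c$ — equivalently, $p_c$ retains a full quorum disjoint from that support. Condition~(iii) is what prevents any re-broadcast wave started by $p_b$ from eventually dragging $p_c$ into delivering and rescuing totality; it is exactly the freedom that depth $1$ affords over depth $2$, and checking that it coexists with the consistency axiom is the delicate part of the proof.
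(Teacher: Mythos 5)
Your high-level strategy matches the paper's: build a system containing wise processes of depth exactly~$1$, use the third (isolation) property of Definition~\ref{def:qba} to extract a run in which a deliverer hears only from one quorum plus the initiator, then replay that run with a faulty sender so that one wise process delivers while another is starved, violating \textbf{Totality}. The paper realizes this with an explicit six-process system ($\mathbb{F}_D$, $\mathbb{Q}_D$, faulty set $F=\{p_5,p_6\}$): $p_1,p_2$ have depth exactly~$1$; $p_2$ is made to deliver by replaying the good run toward its quorum $\{3,4,5,6\}\cap\{4,5,6\}=\{4,5,6\}$ and the initiator; and $p_1$ is starved because each of its quorums permanently misses a member --- $\{4,5,6\}$ misses the silent faulty processes, and $\{3,4\}$ misses $p_3$, which never wakes up since its only quorum $\{3,5,6\}$ contains no active correct process and the faulty initiator never contacts it.

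The genuine gap in your proposal sits exactly where you defer the work: the construction of $\mathbb{Q}$ and the conditions it must satisfy. Your conditions (i)--(iii) are not the right ones. Condition (iii) --- that the correct supporters of $m$ not form a kernel for $p_c$ --- is neither necessary (the paper's own construction violates it: $p_4$ alone intersects both of $p_1$'s quorums, so the support is a kernel for $p_1$) nor sufficient. Under Definition~\ref{def:qba}, what prevents $p_c$ from delivering is that \emph{every} quorum of $p_c$ must permanently miss some member, and the only processes that can be kept permanently silent are the faulty ones and correct processes that are never awakened. Since links between correct processes are reliable, asynchrony cannot ``delay every message from $p_b$'s region to $p_c$'' forever, and the adversary cannot ``reproduce'' the inputs of a naive member of $Q_c$ whose quorum contains an active correct process: by the second property of Definition~\ref{def:qba}, that member's state legitimately changes upon such a message, after which the (unknown) algorithm may have it message $p_c$, completing $p_c$'s all-correct quorum and destroying the indistinguishability with the silent run $E_0$. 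The missing requirement is a closure property on the blocked set: each quorum of $p_c$ must contain a process that is faulty or blocked, where a correct process is blocked only if every one of its quorums consists solely of faulty or blocked processes and the faulty initiator never contacts it. This closure is what the paper's $p_3$ (with sole quorum $\{3,5,6\}$) is engineered to satisfy; it is the delicate heart of the proof, and your proposal neither states it nor exhibits a system meeting it together with (i) and (ii).
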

\begin{proof}
    Consider  a system with processes $p_1, p_2, p_3, p_4, p_5$ and $p_6$ and the following fail-prone system. 

    $$
    \mathbb{F}_D: \quad
    \begin{aligned}
    \mathcal{F}_1 &= \{\{1, 2, 5, 6\}, \{1, 2, 3\}\} \\
    \mathcal{F}_2 &= \{\{1, 2, 5, 6\}, \{1, 2, 3\}\} \\
    \mathcal{F}_3 &= \{\{1, 2, 4\}\} \\
    \mathcal{F}_4 &= \{\{1, 2, 3\}\} \\
    \mathcal{F}_5 &= \{\{1, 2, 4\}\} \\
    \mathcal{F}_6 &= \{\{1, 2, 4\}\} 
    \end{aligned}
    $$
    
    It satisfies the $B^3$ property and therefore the following canonical quorum system is valid.
    $$
    \mathbb{Q}_D: \quad
    \begin{aligned}
    \mathcal{Q}_1 &= \{\{3, 4\}, \{4, 5, 6\}\} \\
    \mathcal{Q}_2 &= \{\{3, 4\}, \{4, 5, 6\}\} \\
    \mathcal{Q}_3 &= \{\{3, 5, 6\}\} \\
    \mathcal{Q}_4 &= \{\{4, 5, 6\}\} \\
    \mathcal{Q}_5 &= \{\{3, 5, 6\}\} \\
    \mathcal{Q}_6 &= \{\{3, 5, 6\}\} 
    \end{aligned}
    $$

    Let the processes $p_1$, $p_2$, $p_3$, $p_4$ be correct and $p_5$, $p_6$ be Byzantine. Since $p_1$ and $p_2$ have a quorum composed only of correct processes ($\{3, 4\}$) they have depth 1. We now show an execution in which $p_1$ \op{dar-delivers} while $p_2$ does not, which breaks the totality property of \op{RB[1]}.

    Firstly consider an execution $E_0$ in which $p_5$ \op{dar-broadcasts} value 0, all processes behave according to the algorithm. Processes $p_1$ and $p_2$ will deliver value 0. From Definition~\ref{def:qba} we know that there is an equivalent execution $E$ where $p_1$ and $p_2$ deliver the same value and only receive messages from the processes in their quorum $\{4, 5, 6\}$ and $p_5$. 

    Now consider an execution $E'$ in which processes $p_5$ and $p_6$ behave towards $p_4$ like in $E$. Note that $p_4$ will behave like in $E$ since at every step it will receive the same messages from the same processes as it did in $E$. Let $p_5$ and $p_6$ also behave towards $p_2$ like in $E$ and send no messages towards all other processes. Process $p_3$ does not begin the execution since it never receives any message from the initiator (Byzantine) nor any member of one of its quorums (processes 3, 5 or 6).  Process $p_2$ will deliver value 0 as it receives the same messages from the same processes as in $E$. On the contrary, process $p_1$ will not deliver any value since it will never receive messages from all processes in one of its quorums, which would contradict Definition~\ref{def:qba}. This comes from the fact that $p_3$ will never send any message and that $p_5$ and $p_6$ are Byzantine and will not send it any message. 
    Therefore, $p_2$ delivers a value but $p_1$ does not, which breaks the totality property of \op{RB[1]}.
\end{proof}

\begin{theorem}
    Algorithm~\ref{alg:rb3}  solves \op{RB[3]}.
    \label{lem:rb3correct}
\end{theorem}

\begin{proof}
      \textbf{Validity}:  This follows directly from the structure of the protocol. If a correct process broadcasts a value $m$ then all processes with depth 0 will send \msg{Echo}{m} messages. Therefore all processes with depth 1 will receive such messages from at least one of their quorums. Then all processes with depth 1 will send \msg{Readyafterecho}{1, m} messages. In the next round, all processes with depth 2 will receive such messages from one of their quorums and will deliver the value $m$. Since processes with depth 3 also have depth 2 this means that all processes with depth 3 will deliver $m$, satisfying the validity property. 
    
     \textbf{Consistency}: Suppose a process $p_i$ with depth 3 \textit{dar-delivered} a value $m_i$ and another process with depth 3 $p_j$ \textit{dar-delivered} a value $m_j$. By Lemma~\ref{lem:d3-d1} this means that there exists a process with depth 1 $p_i'$ such that it received \msg{Echo}{m_i} messages from one of its quorums $Q_{i'} \in \mathcal{Q}_{i'}$ and another process with depth 1 $p_j'$ that received \msg{Echo}{m_j} messages from one of its quorums $Q_{j'} \in \mathcal{Q}_{j'}$. Since $Q_i' \cap Q_j' \nsubseteq F$ there is at least one correct process that sent \msg{Echo}{m_i} and \msg{Echo}{m_j} messages. But since a correct process will only send one type of echo messages, we conclude that $m_i=m_j$.

     \textbf{Integrity}:  The variable \op{delivered} guarantees that no correct process delivers multiple times. Moreover, if the sender $p_s$ is correct, the only message $m'$ such that $p_s$ sends a message $\msg{send}{m}'$ is $m$, thus  $m$ was previously \textit{dar-broadcast} by $p_s$.

    \textbf{Totality:}  Suppose that a process with depth 3 $p_i$ has \op{dar-delivered} a value $m$. Then it has obtained $\msg{Readyafterecho}{n, m}$ messages from one of its quorums $Q_i \in mathcal{Q}_i$. Consider any other process with depth 1 $p_j$. Since $p_i$ and $p_j$ are both wise, it holds $F \in \mathcal{F}_i^*$ and $F \in \mathcal{F}_j^*$, which implies $F \in \mathcal{F}_i^* \cap \mathcal{F}_j^*$. Then, the set $K = Q_i \setminus F$ intersects every quorum of $p_j$ and is therefore, a kernel of correct processes for it. Process $p_j$ will receive \msg{Readyafterecho}{n}{m} messages from $K$ and proceed to send a \msg{Readyafterready}{n, m} messages. 

    Since all processes with depth 1 send a  \msg{Readyafterready}{n, m} message, all processes with depth 2 will receive such messages from one of their quorums, prompting them to send a \msg{Readyafterecho}{n+1, m} message. Since all processes with depth 2 will send such messages all processes with depth 3 will receive \msg{Readyafterecho}{n+1, m} from one of their quorums and proceed to \textit{dar-deliver} the value $m$.

\end{proof}

\begin{corollary}
    Algorithm~\ref{alg:rb3} solves \op{RB[$d$]} for any $d \geq 3$
\end{corollary}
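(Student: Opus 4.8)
The plan is to derive this corollary immediately from the two results already established for Algorithm~\ref{alg:rb3}, namely Lemma~\ref{lem:rb3correct} together with the monotonicity statement of Lemma~\ref{lem:solve-s-solve-s'}. First I would invoke Lemma~\ref{lem:rb3correct} to fix the base case: Algorithm~\ref{alg:rb3} solves \op{RB[3]}, so the claim already holds for $d = 3$.

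Next I would extend this upward. Lemma~\ref{lem:solve-s-solve-s'} asserts that any quorum-based algorithm solving \op{RB[$s$]} also solves \op{RB[$s'$]} for every $s' > s$; its proof rests on the fact that the set of processes of depth at least $s'$ is contained in the set of processes of depth at least $s$, so every required property simply restricts to the smaller set. Applying this with $s = 3$ yields that Algorithm~\ref{alg:rb3} solves \op{RB[$d$]} for all $d > 3$. Combining this with the base case $d = 3$ covers the full range $d \geq 3$, which is exactly the statement of the corollary.

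The only hypothesis that must be checked before invoking Lemma~\ref{lem:solve-s-solve-s'} is that Algorithm~\ref{alg:rb3} is quorum-based in the sense of Definition~\ref{def:qba}, and this is the single point I would verify explicitly rather than take for granted. A correct process outputs \op{dar-deliver}$(m)$ only after collecting \str{readyafterecho} messages from a full quorum $Q_i \in \CQ_i$, matching the first bullet of Definition~\ref{def:qba}; each state change is triggered by a message from a quorum member or from the sender, matching the second bullet; and since every guard depends only on having received a quorum's worth of matching messages, any execution producing an output can be restricted to the members of one quorum together with the sender while yielding the same output, matching the third bullet. As the paper stipulates that all algorithms under consideration are quorum-based, there is no genuine obstacle here, and the argument reduces to a direct composition of the two lemmas.
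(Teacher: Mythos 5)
Your proof is correct and follows exactly the route the paper intends: the corollary is an immediate combination of Lemma~\ref{lem:rb3correct} (the base case $d=3$) with the monotonicity of Lemma~\ref{lem:solve-s-solve-s'}. Your explicit check that Algorithm~\ref{alg:rb3} satisfies Definition~\ref{def:qba} is a welcome bit of diligence the paper leaves implicit, but it does not change the argument.
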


In Algorithm~\ref{alg:rb3} and Definition~\ref{def:arbc}, we assume that all correct processes start the execution. This allows us to focus solely on characterizing the depth of the processes that complete the execution. However, in more complex scenarios, such as when protocols are composed sequentially, it is important to also characterize the depth of the processes that start the execution.

For example, consider two protocols executed one after the other. If only processes with depth $d$ complete the first protocol, we cannot assume that all correct processes will initiate the second; we can only rely on those with depth $d$ to do so. This highlights the need to explicitly define both the starting and ending depths of participating processes, especially in composed protocols.

In the case of reliable broadcast, we treat it as a standalone algorithm and thus omit this distinction. In later sections, we will define protocols using two depth parameters: one for the processes that begin the execution, and one for those that finish.

\section{Consensus}
\label{sec:consensus}

We adapt the consensus problem to the depth paradigm and propose an algorithm that solves it for all processes with a minimum depth of at least 10. We follow an approach similar to the one proposed by Abraham~\etal~\cite{DBLP:conf/podc/AbrahamBY22}, who use a round-based algorithm composed of a binding crusader agreement and a common coin.
However, for reasons that will be discussed in Section~\ref{sec:consensus-protocol}, we replace binding crusader agreement with a binding \emph{graded} agreement protocol which allows parties to provide multiple input.
For this purpose, we introduce depth-characterized versions of these primitives in Sections~\ref{sec:cc}~and~\ref{sec:bga}, which we use as subprotocols inside the main consensus algorithm. 

It is at this point that the need to characterize the depth of starting and finishing processes arises. The consensus protocol first calls the binding graded agreement, and then the common coin. Only processes that finish the former will be able to make the call to the latter. Because of this, we need to take into account that not all correct processes can start executing a protocol. We introduce two parameters to characterize the depth of protocols: $d'$ to refer to the depth of processes that start the execution, and $d$ to refer to the depth of processes that finish the execution. We will say that an algorithm works if, in case all processes with depth $d'$ start the execution, all processes with depth $d$ satisfy the required properties. In case we assume that all correct processes start (i.e., $d'=0$), as is the case for standalone protocols (like consensus and reliable broadcast), we will ignore the $d'$ parameter.

\subsection{Common Coin}
\label{sec:cc}

The existing asymmetric common coin protocol proposed by Alpos~\etal~\cite{DBLP:journals/dc/AlposCTZ24} has issues that motivate the development of a new protocol. It is not a purely asymmetric protocol, as each process first constructs the tolerated system, which is a symmetric fail-prone system, and executes the protocol based upon it. This means that the protocol is actually symmetric and is not suitable for asymmetric applications. We address this by first redefining the problem to fit the depth paradigm, shown in Definition~\ref{def:cc}, and propose a modified version of the scheme introduced by Alpos~\etal~\cite{DBLP:journals/dc/AlposCTZ24}, which solves the problem for all processes with depth at least $d+1$, as long as all processes with depth $d$ start the execution. The parameter for all processes starting the protocol is needed since it will be used as part of a pipeline of protocols, which means that only the processes that finish the preceding protocol are guaranteed to start executing the common coin.

\begin{definition}[Depth-characterized asymmetric common coin]\label{def:cc}
  A protocol for \emph{common coin}, for processes with depth $d'$ and $d$, defined by the events \textit{release-coin(round)} and \textit{output-coin(c)}, shortened as \op{CC}$[d', d]$, satisfies the following properties:
\begin{description}
\item[Termination:] If all processes with depth $d'$ call \textit{release-coin}, then every process
  with depth $d$ eventually executes \textit{output-coin}.
  
\item[Unpredictability:] No process has information about the value of the coin before at least one process with depth $d'$ has called \textit{release-coin}.

\item[Matching:] If a process with depth $d$ executes \textit{output-coin(\var{c})} and another process with depth $d$ executes \textit{output-coin(\var{c}$'$)}, then $\var{c}=\var{c}'$.

\item[No bias:] The distribution of the coin is uniform over $\mathcal{B}$.

\end{description}
\end{definition}

Algorithm~\ref{alg:acc} provides a solution to the common coin taking any value $d'$ and $d=d'+1$. In our consensus construction, we require \op{CC[8, 9]} in particular. The common coin we design is strong and it follows the approach of Rabin~\cite{DBLP:conf/focs/Rabin83} and assumes that coins are predistributed by a trusted dealer. The scheme uses Benaloh-Leichter~\cite{DBLP:conf/crypto/Leichter88} secret sharing, such that the coin is additively shared within every quorum.  The dealer shares one coin for every possible round of the protocol and when a process receives a share it can verify that it was indeed created by the dealer. This can be done using standard digital signatures but is outside the scope of this paper. Alpos~\etal~\cite{DBLP:journals/dc/AlposCTZ24} follow the same approach in their scheme. Theorem~\ref{theorem:acc} proves the correctness of the algorithm.

\begin{algo*}[tbh]
\vbox{
\small
\begin{numbertabbing}\reset
  xxxx\=xxxx\=xxxx\=xxxx\=xxxx\=xxxx\=MMMMMMMMMMMMMMMMMMM\=\kill
  \textbf{State} \label{}\\
  \> $\mathbb{Q}$: asymmetric quorum system \label{} \\
  \> \parbox[t]{0.87\linewidth} {$\var{share}[Q][j]$: if $p_i \in Q$, this holds the share received from $p_j$ for quorum $Q$; initially $\nil$} \label{} \\
  \\
  \textbf{upon event} \(\op{release-coin()}\) \textbf{do} \label{line:release-coin-begin}\\
  \> \textbf{for all} \(Q \in \mathcal{Q}_j\) where \(\mathcal{Q}_j \in \mathbb{Q}\) \textbf{such that} $p_i \in Q$ \textbf{do} \label{line:pi-in-G} \\
  \>\> let $s_{iQ}$ be the share of $p_i$ for quorum $Q$ \label{} \\
  \>\> send message \msg{share}{s_{iQ}, Q} to \(p_j\) to all $p_j$ in $\mathcal{P}$\label{line:release-coin-end} \\
  \\
  \textbf{upon} receiving a message \msg{share}{s, Q} from \(p_j\)
  \textbf{such that} $ p_j \in Q$ \textbf{do} \label{line:receive-share-begin} \\
  \> \textbf{if} \(\var{share}[Q][j] = \bot\) \textbf{then}  \label{}\\
  \>\> \(\var{share}[Q][j] \gets s\) \label{line:receive-share-end}\\
  \\
  \parbox[t]{0.87\linewidth} { \textbf{upon } exists $Q \in \mathcal{Q}_i$ \textbf{such that}
    for all $j$ with $p_j \in Q$, it holds \(\var{share}[Q][j] \neq \nil \) \textbf{do}} \label{line:output-coin-begin}\\
  \> $s \gets \sum_{j: p_j \in Q} \var{share}[Q][j]$ \label{line:compute-s}\\
  \> \textbf{output} $\op{output-coin(s)}$ \label{line:output-coin-end}\\[-5ex]
\end{numbertabbing}
}
\caption{Asymmetric common coin  (code for~$p_i$)}
\label{alg:acc}
\end{algo*}

\begin{lemma}
    Consider an execution with faulty processes $F$, a process $p_i$ such that $\op{depth(}p_i\op{)} \geq 1$ and one of its quorums $Q_i \in \mathcal{Q}_i$. For any other process $p_j$ with $\op{depth(}p_j\op{)} \geq 1, \exists K_j \in \mathcal{K}_j$ such that $K_j \subseteq Q_i \setminus F$.
    \label{lem:q-always-k}
\end{lemma}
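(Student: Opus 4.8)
The plan is to show that $Q_i \setminus F$ is a set meeting every quorum of $p_j$, and then to extract from it a kernel $K_j \in \mathcal{K}_j$ with $K_j \subseteq Q_i \setminus F$. The engine of the argument is the quorum consistency property of Definition~\ref{def:abqs}, applied with the actual faulty set $F$ playing the role of the common fail-prone set $F_{ij}$.

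First I would translate the depth hypotheses into statements about fail-prone sets. Following the classification recalled after Definition~\ref{def:depth-process}, a correct process of depth at least $1$ is \emph{wise}: having a quorum disjoint from $F$ means, through the fail-prone/quorum correspondence, that $F$ is dominated by one of the process's fail-prone sets. Hence both $p_i$ and $p_j$ satisfy $F \in \mathcal{F}_i^*$ and $F \in \mathcal{F}_j^*$, and in particular $F \in \mathcal{F}_i^* \cap \mathcal{F}_j^*$, so $F$ is a legitimate choice for $F_{ij}$.

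Next I would invoke consistency against an arbitrary quorum of $p_j$. Fixing any $Q_j \in \mathcal{Q}_j$ and taking $F_{ij} = F$ in the consistency property yields $Q_i \cap Q_j \nsubseteq F$, i.e. $(Q_i \cap Q_j) \setminus F \neq \emptyset$, which is the same as $(Q_i \setminus F) \cap Q_j \neq \emptyset$. Since $Q_j \in \mathcal{Q}_j$ was arbitrary, the set $Q_i \setminus F$ intersects every quorum of $p_j$. By the definition of a kernel, a set that meets every quorum of $p_j$ contains a kernel for $p_j$: deleting elements of $Q_i \setminus F$ one at a time while keeping the intersection-with-all-quorums property produces a minimal such subset $K_j \in \mathcal{K}_j$ with $K_j \subseteq Q_i \setminus F$, as required. (This is precisely the kernel-extraction step already used in the totality argument of Lemma~\ref{lem:rb3correct}.)

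I expect the only delicate point to be the first step, namely the passage from $\op{depth}(p_i) \geq 1$ to $F \in \mathcal{F}_i^*$, since this is what licenses using $F$ as the common fail-prone set in consistency; both the consistency application and the kernel extraction are then routine. If one prefers to avoid relying on the depth-to-wisdom correspondence, the same conclusion can be reached directly from the witnessing all-correct quorum guaranteed by depth at least $1$, which exhibits a fail-prone set of each process containing $F$.
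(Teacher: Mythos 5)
Your proof is correct and follows essentially the same route as the paper's: translate depth at least $1$ into wisdom ($F \in \mathcal{F}_i^* \cap \mathcal{F}_j^*$), apply quorum consistency with $F_{ij} = F$ to show $Q_i \setminus F$ meets every quorum of $p_j$, and extract a kernel from that set. The paper's proof is just a terser version of the same three steps, including the depth-to-wisdom passage you flag as the delicate point.
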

\begin{proof}
    Since $p_i$ and $p_j$ both have depth at least 1, it holds $F \in \mathcal{F}_i^*$ and $F \in \mathcal{F}_j^*$. This implies $F \in \mathcal{F}_i^* \cap \mathcal{F}_j^*$. Then, the set $Q_i \setminus F$ intersects every quorum of $p_j$ by the quorum consistency property, and therefore contains a kernel for $p_j$.
\end{proof}

\begin{theorem}
    \label{theorem:acc}
Algorithm~\ref{alg:acc} implements \op{CC[d, d+1]}. 
\end{theorem}
\begin{proof}
    \textbf{Termination: } Consider a process $p_i$ with depth $d+1$. There exists a quorum $Q \in \mathcal{Q}_i$ such that all its members have depth at least $d$. Since we assume that they all start the protocol, they will release the coin and send their share to all the processes in $\mathcal{P}$. Therefore, $p_i$ will receive all the coin shares for at least the quorum $Q$ and will be able to output the coin value. 

    \textbf{Unpredictability: } Assume a process $p_i$ with depth $d+1$ outputs the coin. This implies the existence of a set $Q \in \mathcal{Q}_i$ such that each member of $Q$ sent a \str{share} message. Since $p_i$ has a quorum $Q_d \in \mathcal{Q}_i$ such that $\forall p \in Q_i, \op{depth(}p\op{)}\geq d$ and $Q \cap Q_d \neq \emptyset$, it follows that at least one process with depth d has released the coin.

    \textbf{Matching: } Follows from the fact that the value of the coin for each round is predetermined by the dealer.

    \textbf{No bias: } Follows from the fact that the value of the coin for each round is predetermined by the dealer.
\end{proof}

\subsection{Binding Graded Agreement}
\label{sec:bga}

Following the standard pattern, we will build the consensus protocol relying on an intermediate agreement protocol. We will use a binding graded agreement protocol based on the justified proxcensus protocol of Kamp~\cite{DBLP:journals/iacr/Kamp25}. Definition~\ref{def:bga} presents the properties that such a primitive must satisfy to reflect that only a fraction of correct processes might start the execution. This is of use in the round-based context where this primitive will be used, as we can only guarantee that processes with a certain depth will be alive in consensus rounds other than the first. 

Our depth-characterized asymmetric binding graded agreement takes as input one or multiple values and produces as output a value with a corresponding grade. A correct process may start the protocol with a given input and later provide an additional input to the protocol. This detail is crucial for the liveness of the consensus protocol, which will be explained later. 

In contrast to binding crusader agreement, which outputs a value in $\{0,\bot,1\}$, our binding graded agreement additionally outputs a grade between $\{0, 1, 2\}$ which signifies the confidence of the process in the decision. The possible outputs are $(1, 2), (1, 1), (\bot, 0), (0, 1), (0, 2)$ and the protocol must guarantee that the outputs of any two processes with a certain depth are the same or next to each other in the previously provided order. This is made explicit in the graded agreement property. 

\begin{definition}[Depth-characterized asymmetric binding graded agreement]\label{def:bga}
  A protocol for \emph{binding graded agreement} with depths $d'$ and $d$, defined through the events \textit{bga-propose(v)} and \textit{bga-decide(v,g)}, shortened as \op{BGA}$[d', d]$, satisfies the
  following properties:
\begin{description}
\item[Strong Termination:] If there exists a value $v$ such that for each process $p_i$ with depth $d'+1$, a kernel $K_i\in \mathcal{K}_i$ of correct processes inputs $v$, then every process with depth $d$ eventually \textit{bga-decides}.

\item[Validity:] 
If a process with depth $d'$ \textit{bga-decides} on a value different from $(v, 2)$, then a process with depth $d'$ \textit{bga-proposed} $1-v$.

\item[Graded Agreement:] If two processes with depth $d'$ \textit{bga-decide} $(v,g)$ and $(v',g')$, then 
1) $\vert g - g' \vert \leq 1$, 
 2) $v = \bot$ if and only if $g = 0$, and 
3) if $\min(g, g') > 0$ then $v = v'$.

\item[Binding:] Let time $\tau$ be the first time at which there is a process with depth $d$ that \textit{bga-decides}. There is a value $b \in \{0, 1\}$, defined at time $\tau$, such that no process with depth $d$ \textit{bga-decides} $(1-b,\cdot)$ in any extension of this execution. 

\end{description}
\end{definition}

We note that strong termination implies that if all correct processes \textit{bga-propose} a value, then every process with depth $d$ eventually \textit{bga-decides} (Lemma~\ref{lem:kernel-same-value}). However, we need the stronger requirement for composition, as processes with depth $0$ cannot be guaranteed to obtain an input whenever this depends on the output of another protocol.
Algorithm~\ref{alg:bga} presents a solution to this problem for $d'=2$ and $d=6$. We prove its correctness in Theorem~\ref{theorem:bga}. Lemmas~\ref{lem:kernel-same-value},~\ref{lem:rb-consistency-d2},~and~\ref{lem:bga-valid-deliver} prove helper results which are of use in the proof of the main lemma.

To simplify the protocol description, we introduce the following notion of a valid \op{dar-delivery} of a vote.
\begin{definition}[Valid \op{dar-deliver}]
When we say that the $\op{dar-delivery}$ of a vote message $\msg{vote}{\ell,\var{v}_j,Q_j}$ is \emph{valid} in the view of a process $p_i$, we mean that
if $\ell = 0$, then for each $p_k \in Q_j$ a message $\msg{quorum}{\var{v}_j}$ from $p_k$ was $\op{dar-delivered}$.
Otherwise, for $\ell > 0$ we require that for each $p_k \in Q_j$ a message $\msg{vote}{\ell-1,\var{v}_k,Q_k}$ from $p_k$ was validly $\op{dar-delivered}$ such that the average of the votes included in the set $\{v_k \mid p_k \in Q_j\}$ is equal to $v_j/2$.
\end{definition}
In other words, we require that $p_i$ should receive messages that would cause it to vote for $v_j$ if $Q_j$ was its own quorum.
 
The usage of reliable broadcast for these messages guarantees that if a process with some depth delivers a message, the same message will be eventually delivered by any other process with the appropriate depth. This prevents faulty processes from sending incorrect information to a subset of the processes. 
Looking ahead, we show in Lemma~\ref{lem:bga-valid-deliver} that the set of all votes $v$ and $v'$ that belong to the same round and are validly $\op{dar-delivered}$ in the view of a process with depth 2: $\vert v - v'\vert \leq 1$. 
Similarly, we show in Lemma~\ref{lem:bga-validity} that if no party with depth 2 inputs a value different from $v$, then all vote messages that are validly \op{dar-delivered} at parties with depth 2 in round $\ell$ must include a vote for $v\cdot 2^\ell$.
As such, receiving a validly \op{dar-delivered} vote, corresponds to receiving a vote justified by threshold signatures in the proxcensus protocol by Kamp~\cite{DBLP:journals/iacr/Kamp25}.

\begin{algo*}[tbhp]

\vbox{
\small
\begin{numbertabbing}\reset
  xxxx\=xxxx\=xxxx\=xxxx\=xxxx\=xxxx\=MMMMMMMMMMMMMMMMMMM\=\kill
  \textbf{State} \label{}\\
  \> \parbox[t]{0.87\linewidth} {$\var{decided}\gets [\false, \false, \false, \false]: $ // variable to control the sending of vote messages}\label{} \\
  \\

  \textbf{upon event} \(\op{bga-propose(\var{value})}\) \textbf{do} \label{line:bga-propose}\\
  \> send message $\msg{input}{\var{value}}$ to all \(p_j \in \CP\) \label{line:bga-echo} \\
  \\
  \parbox[t]{0.87\linewidth} {\textbf{upon} receiving a message $\msg{input}{\var{value}}$ from all processes in some $K \in \mathcal{K}_i$ \textbf{do}} \label{line:bga-kernel-echo} \\
  \> send message $\msg{kernel}{\var{value}}$ to all \(p_j \in \CP\) \label{line:bga-send-kernel} \\
  \\
  \parbox[t]{0.87\linewidth} {\textbf{upon} receiving a message $\msg{kernel}{\var{value}}$ from all processes in some $Q \in \mathcal{Q}_i$ \textbf{do}} \label{line:bga-kernel-echo'} \\
  \> $\op{dar-broadcast}(\msg{quorum}{\var{value}})$ \label{line:bga-send-quorum} \\
  \\
  \parbox[t]{0.87\linewidth} {\textbf{upon} $\op{dar-deliver}(\msg{quorum}{\var{value}})$ from each sender $p_j$ in some $Q \in \mathcal{Q}_i$ \textbf{do}} \label{line:bga-quorum-echo} \\
  \> \textbf{if} $\neg \var{decided}[1]$ \textbf{then} \label{line:bga-not-sentecho2} \\
  \>\> $\op{dar-broadcast}(\msg{vote}{1,\var{value},Q})$ \label{line:bga-send-vote-1} \\
  \>\> $\var{decided}[1] \gets \true$ \label{} \\
  \\

  \parbox[t]{0.87\linewidth} {\textbf{upon} valid $\op{dar-deliver}(\msg{vote}{\ell,\var{value}_j,Q_j})$ from each sender $p_j$ in some $Q \in \mathcal{Q}_i$ and $\neg \var{decided}[\ell+1]$ \textbf{do}} \label{line:bga-quorum-vote} \\
  \>$\var{decided}[\ell+1]\gets \true$ \label{}\\
  \> \parbox[t]{0.87\linewidth} {$\var{value}\gets 2\cdot\op{mean}({\{ \var{value}_j \mid p_j \in Q\}})$  //set with the different values received in the quorum}\label{line:bga-set-new-value}\\
  \> \textbf{if} $\ell<3$ \textbf{then}\label{}\\
  \>\> $\op{dar-broadcast}(\msg{vote}{\ell+1,\var{value}, Q})$ \label{}\\
  \> \textbf{else} \label{}\\
  \>\> \textbf{if} $\var{value}=0$\label{}\textbf{then}\\
  \>\>\> \op{bga-decide}(0,2)\label{}\\
  \>\> \textbf{elif} $\var{value}=1$\label{}\textbf{then}\\
  \>\>\> \op{bga-decide}(0,1)\label{}\\
  \>\> \textbf{elif} $\var{value}=7$\label{}\textbf{then}\\
  \>\>\> \op{bga-decide}(1,1)\label{}\\
  \>\> \textbf{elif} $\var{value}=8$\label{}\textbf{then}\\
  \>\>\> \op{bga-decide}(1,2)\label{}\\
  \>\> \textbf{else}\label{l:gda2}\\
  \>\>\> $\op{bga-decide}(\bot,0)$\label{}
\end{numbertabbing}
}
\caption{Asymmetric binding graded agreement (code for~$p_i$)}
\label{alg:bga}
\end{algo*}

\begin{lemma}

Given an asymmetric quorum system $\mathbb{Q}$, a set of faulty processes $F$, and any assignment $M: \mathcal{P} \rightarrow \{0, 1\}$ of a value in $\{0, 1\}$ to each process in $\mathcal{P}$. There is a value $v \in \{0, 1\}$, such that each process $p_i$ with depth 1 has a kernel $K \in \mathcal{K}_i$ such that $K \subseteq \mathcal{P} \setminus F$  and $M(p_j) = v$ for all $p_j \in K$.

    \label{lem:kernel-same-value}
\end{lemma}
\begin{proof}
    Given the assignment $M$, we will use the notation $\mathcal{P}\setminus F = S_0 \cup S_1$ where $S_0$, respectively $S_1$, are the processes in $\mathcal{P} \setminus F$ that have 0, respectively 1, in $M$. Suppose that there is no value $v$ such that there is a kernel of correct processes for all processes with depth at least 1 contained in $S_v$. Then, there is a process with depth 1 $p_j$ and a quorum $Q_j\in \mathcal{Q}_j$ such that $Q_j \cap S_0 = \emptyset$.  This also implies that there is a process with depth 1 $p_k$ and a quorum $Q_k\in \mathcal{Q}_k$ such that $Q_k \cap S_1 = \emptyset$. 

    Then, $Q_j\cap Q_k \subseteq F$, but since $p_j$ and $p_k$ have depth at least 1, this breaks the consistency property. Thus, they have to intersect either in $S_0$ or $S_1$ and there must be value $v$ such that every process with depth 1 has at least one kernel contained in $S_v$. 
\end{proof}

\begin{lemma}
    In Algorithm~\ref{alg:rb3}, if a process with depth 3 exists and two processes $p_i, p_j$ with depth 2 \textit{dar-deliver} values $v_i, v_j$ respectively, then $v_i=v_j$. 
    \label{lem:rb-consistency-d2}
\end{lemma}

\begin{proof}
    We assume the existence of a process $p_3$ with depth 3 with a quorum $Q_2 \in \mathcal{Q}_3$ such that all processes in $Q_2$ have depth 2. 
    Process $p_i$ delivers value $v_i$ upon obtaining  \msg{readyafterecho}{r, v_i} messages from a quorum $Q_i \in \mathcal{Q}_i$. Process $p_j$ does the same upon obtaining  \msg{readyafterecho}{r', v_j} messages from a quorum $Q_j \in \mathcal{Q}_j$. Since $Q_i\cap Q_2 \neq \emptyset$ and $Q_j \cap Q_2  \neq \emptyset$, there must be at least one process in both intersections. Denote by $p_i'$ and by $p_j'$ the processes in these intersections, respectively. 

    By Lemma~\ref{lem:d1echo-d1echo} there are processes $p_i'', p_j''$ with depth 1 that sent a \msg{Readyafterecho}{r-1, m}, \msg{Readyafterecho}{r'-1, m} message. By repeatedly applying Lemma~\ref{lem:d1echo-d1echo} we reach processes $p_i''', p_j'''$ with depth 1 that sent a \msg{Readyafterecho}{1, m} message. We now apply Lemma~\ref{lem:d1echo} to obtain that $\exists Q_{i'''} \in \mathcal{Q}_{i'''}$, respectively $Q_{j'''} \in \mathcal{Q}_{j'''}$,  such that all processes in $Q_{i'''}, Q_{j'''}$ sent a \msg{Echo}{v_i}, \msg{Echo}{v_j} respectively message. Since $Q_{i'''} \cap Q_{j'''} \subseteq \mathcal{P}\setminus F$, there must be a correct process that sent echo messages for $v_i$ and $v_j$, but since a correct process only sends echo messages for one value, we obtain that this is not possible. Therefore, $v_i$ and $v_j$ must be equal. 
\end{proof}

\begin{lemma}
    If a process with depth 2 \textit{bga-decides} on a value different from $(v, 2)$, then a process with depth 2 \textit{bga-proposed} $1-v$. 
\label{lem:bga-validity}
\end{lemma}
\begin{proof}

    In what follows, we will assume the existence of a process $p_8$ with depth at least 8. This assumption holds since we are providing a binding graded agreement for processes with depth at least 8. 

    We will prove the contrapositive, that is, if no process with depth 2 proposed $1-v$ then no process with depth 2 decides on a value different from $(v, 2)$. 
     
     We know that if a process with depth 2 \textit{bga-proposes} a value, this value will be $v$. Then, for all processes with depth 3 that receive a kernel of \msg{input}{v'} messages and send a \msg{kernel}{v'} message (line~\ref{line:bga-kernel-echo}), $v'$ must be equal to $v$. This follows from the fact that each kernel for a process with depth at least 3 contains at least one process with depth 2. 
     
     For all processes with depth at least 1 that receive a quorum of \msg{kernel}{v'} values (line~\ref{line:bga-kernel-echo'}), $v'=v$ since from the quorum intersection property it follows that there must be at least one process with depth 3 in that quorum and all of them sent \msg{kernel}{v'}.  

     For all processes with depth 1 that receive a quorum of \msg{quorum}{v'} messages (line~\ref{line:bga-quorum-echo}), the same argument applies and $v'$ must be equal to $v$. Then, all will send \msg{vote}{1, v, Q} messages. 

    For all processes with depth 2 that reliably deliver a quorum of vote messages for round 1, it must be that all vote messages are associated with value $v$. This follows from the \textit{valid} function check before accepting the messages (line~\ref{line:bga-quorum-vote}). Suppose that a process with depth 2 accepts a quorum $Q$ of vote messages for round 1 containing both values $v$ and $v'\neq v$. Then, it must have accepted at least one \msg{vote}{1, v', Q_j} message coming from a process $p_j \in Q$. Since there is a \textit{valid} function check for all these messages, to accept it, the process must have also received \msg{quorum}{v'} messages from all process $p_k\in Q_j$ before. By quorum intersection, there must be at least one process with depth 1 in $Q_j$, and as was shown before, it is not possible for such a process to send a quorum message for something else than $v$. From Lemma~\ref{lem:rb-consistency-d2} we also know that if two processes with depth 2 deliver messages, they must be the same. Then, any process that receives a quorum of \str{vote} messages for round 1, will send a \msg{vote}{2,\var{value},Q_i} with $\var{value}=2\cdot v$. The new assignment for value follows from line~\ref{line:bga-set-new-value}.
    
    The same argument follows for the \str{vote} messages in rounds 2 and 3. Therefore, all vote messages sent by processes with depth 2 in round 2 are done with value $4\cdot v$ and the final value set when receiving a quorum of \str{vote} messages for round 3 is $8 \cdot v$. If $v=0$, any process that \textit{bga-decides} does so with $(0, 2)$. The same argument follows if $v=1$, any process that \textit{bga-decides} does so with $(1, 2)$. This proves the lemma.

\end{proof}

\begin{lemma}

    Given two depth 2 processes $p_i$ and $p_j$ in Algorithm~\ref{alg:bga}, if $p_i$ validly \op{dar-delivers} a message $\msg{vote}{\ell,v,Q}$ and $p_j$ validly \op{dar-delivers} a message $\msg{vote}{\ell,v',Q'}$, then $\vert v'-v\vert\leq 1$, assuming the existence of a process with depth 3. 
    \label{lem:bga-valid-deliver}
\end{lemma}
\begin{proof}
    Consider the first round of \str{vote} messages, a process with depth 2 $p_i$ validly \textit{dar-delivers} a message \msg{vote}{1,v,Q} upon receiving the message itself and a \msg{quorum}{v} message from each process in $Q$. Since the only input values of correct processes are 0 or 1, it is not possible for any \str{quorum} message to contain a value $v$ different from 0 or 1. Therefore, any process $p_i$ that validly \op{dar-delivers} a message \msg{vote}{1,v,Q} will do so with $v=0$ or $v=1$. 
    So, for any two processes with depth 2 $p_i$, $p_j$, that validly \textit{dar-deliver} \str{vote} messages $\msg{vote}{\ell,v,Q}$, $\msg{vote}{\ell,v',Q'}$, it must hold that $|v'-v|\leq 1$.

    Assume that the statement holds for $\ell-1\geq 1$, namely that if 
    $\msg{vote}{\ell-1,v,Q}$, $\msg{vote}{\ell-1,v',Q'}$ are validly $\op{dar-delivered}$ by any pair of processes with depth 2, then $\vert v - v'\vert \leq 1$.
    
    To prove the statement for $\ell \in \{2, 3\}$ we first observe that a depth 2 process will only validly \op{dar-deliver} a message $\msg{vote}{\ell,v^*,Q^*}$, if messages of the form $\msg{vote}{\ell-1,v,Q}$ were validly \op{dar-delivered} from every sender $p$ in $Q^*$ and $v^*$ is twice the average of the set of included votes (line~\ref{line:bga-set-new-value}).
    By induction, all such votes are either all on the same value $v$, or split between two values $v$ and $v'=v+1$ \emph{for all parties of depth 2}.
    The first case is straightforward, as $v^*$ must be equal to $2v$ for all depth 2 processes who validly \op{dar-deliver} a message $\msg{vote}{\ell,v^*,Q^*}$.
    In the latter case, all depth 2 processes who validly \op{dar-deliver} a message $\msg{vote}{\ell,v^*,Q^*}$ only do so
    if $v^*$ equals $2v$, $v + v' = 2v+1$, or $2v' = 2v+2$.
    We only need to rule out the possibility of $\msg{vote}{\ell,2v,Q^*}$ being validly \op{dar-delivered} by $p_i$ and $\msg{vote}{\ell,2v',Q^{**}}$ being validly \op{dar-delivered} by $p_j$ for any depth 2 processes $p_i$ and $p_j$ as the remaining combinations satisfy the statement.
    For this, we use the fact that all depth 2 processes cannot \op{dar-deliver} inconsistent messages (Lemma~\ref{lem:rb-consistency-d2}) and that $Q^*$ and $Q^{**}$ intersect on at least one party. So, in particular it is impossible for depth 2 processes $p_i$ and $p_j$, that $p_i$ validly \op{dar-delivers} $\msg{vote}{\ell,2v,Q^{*}}$ 
    and $p_j$ validly \op{dar-delivers} $\msg{vote}{\ell,2v',Q^{**}}$ as they would each have \op{dar-delivered} $\msg{vote}{\ell-1,\cdot,\cdot}$ from all parties in $Q^*$ and $Q^{**}$, respectively, and for all parties in 
    $Q^* \cap Q^{**} \neq \emptyset$ the messages would be inconsistent as the votes included would need to be $v$ for $p_i$ and $v'=v+1$ for $p_j$.

\end{proof}

\begin{theorem}
    \label{theorem:bga}
Algorithm~\ref{alg:bga} solves depth-characterized asymmetric binding graded agreement with $d'=2$ and $d=8$. 
\end{theorem}

\begin{proof}
    We structure the proof property by property.
    
    \textbf{Strong Termination: } 
    Assume that there exists a value $v$ such that for each process $p_i$ with depth 3, there is an honest kernel $K\in\mathcal{K}_i$ that sends a \msg{input}{v} message. Therefore, all processes with depth 3 receive a message \msg{input}{v} from one of their kernels and send a message \msg{kernel}{v} (line~\ref{line:bga-send-kernel}). From this, all processes with depth 4 will receive a quorum of \msg{kernel}{v} messages and reliably broadcast a message \msg{quorum}{v} (line~\ref{line:bga-send-quorum}). Upon receiving \msg{quorum}{v} messages from a quorum $Q$, which happens for all processes with depth 5, the processes will reliably broadcast a message \msg{vote}{1, v, Q} (line~\ref{line:bga-send-vote-1}). Since this is only done once, by the usage of the $\var{decided[1]}$ variable, note that it is possible for one process with depth 5 to send this vote with $v=0$ and for another process with depth 5 to do this with $v=1$. This follows from the possible existence of kernels for both 0 and 1 for different processes among the processes that start the execution. 

    All processes with depth 6 will then receive votes with level 1 from one of their quorums and send a \msg{vote}{2, \var{value}', Q} message, where $\var{value}'$ is determined using line~\ref{line:bga-set-new-value}. Therefore, since all processes with depth 6 send this vote message, all processes with depth 7 will receive them from at least one of their quorums. Subsequently, they send a vote message for level 3, which will be received by all processes with depth 8, which will \str{bga-decide} a value and terminate the execution.

    \textbf{Graded Agreement: } Assume that processes $p_i$ and $p_j$ $\op{bga-decide}(v,g)$ and $\op{bga-decide}(v',g')$ respectively.

    \begin{enumerate}
        \item[1)] Note that $p_i$ computes $\var{value}$ as $2\cdot\op{mean}({\{ \var{value}_k \mid p_k \in Q_i\}})$ from the values $\var{value}_k$ from a quorum $Q_i\in\CQ_i$ of messages $\msg{vote}{3,\var{value}_k,Q_k}$ (Line~\ref{line:bga-quorum-vote}). Analogously, $p_j$ computes $\var{value}'$ as $2\cdot\op{mean}({\{ \var{value}'_k \mid p'_k \in Q_j\}})$ from the values $\var{value}_k$ from a quorum $Q_j\in\CQ_j$ of messages $\msg{vote}{3,\var{value}_k,Q'_k}$. Since we assume a process with depth 8, Lemma~\ref{lem:bga-valid-deliver} implies that $\vert \var{value}'-\var{value}\vert \leq 1$, and the fact that $Q_i\cap Q_j$ implies that at most one of the quorums sent a single value, thus the $\var{value}$ and $\var{value}'$ computed by $p_i$ and $p_j$ satisfy $\vert \var{value}-\var{value}'\vert \leq 1$, hence $\vert g-g'\vert \leq 1$. 
        
        \item[2)] Follows from line~\ref{lem:bga-validity}.

        \item[3)] W.l.o.g. assume that $\var{value}'<\var{value}$. Following the reasoning of 1), the values $\var{value}$ and $\var{value}'$ computed by $p_i$ and $p_j$ in the last round of the algorithm satisfy $\vert \var{value}-\var{value}'\vert \leq 1$, thus if $\min\{g,g'\}>0$, this means that  then $v=v'$.
    \end{enumerate}

    \textbf{Validity: } Follows from Lemma~\ref{lem:bga-validity}, as we assume the existence of a process with depth 8.

    \textbf{Binding:} Consider the first process $p_8$ with depth $d=8$ that 
    $\op{dar-delivers}$ messages $\msg{vote}{3,\var{value}_j,Q_k}$ with every sender $p_k$ of one of its quorums $Q_i$ and $\op{bga-decides}(v,g)$. Define this time as $\tau$.
    If $v\neq \bot$, graded agreement implies binding, so assume $v=\bot$.

    Since $v=\bot$, then $\var{value}\in \{2,...,6\}$ and it is computed as $\var{value}=2\cdot\op{mean}({\{ \var{value}_k \mid p_k \in Q_i\}})$. 
    
    If $\var{value}$ is even (2,4,6), then $\var{value}_k=\frac{\var{value}}{2}$ for every $p_k\in Q_i$. Lemma~\ref{lem:bga-valid-deliver} guarantees that any other process $p_j$ with depth 8 that $\op{dar-delivers}$ messages $\msg{vote}{3,\var{value}'_k,Q'_m}$ with every sender $p_m$ of one of its quorums $Q_j$ satisfies $\vert \frac{\var{value}}{2}-\var{value}_m\vert \leq 1$. And quorum intersection with agreement of reliable broadcast implies that there exists $\var{value}_m=\frac{\var{value}}{2}$. Thus, process $p_j$ computes $\var{value}'=2\cdot\op{mean}({\{ \var{value}_m \mid p_m \in Q_j\}})=\var{value}\pm 1$. Thus, any process $p_j$ with depth 8 returns:
    \begin{itemize}
        \item If $\var{value}=2$, $p_j$ $\op{bga-decides}(0,1)$ or $\op{bga-decides}(\bot,0)$.
        \item If $\var{value}=4$, $\op{bga-decides}(\bot,0)$.
        \item If $\var{value}=6$, $p_j$ $\op{bga-decides}(1,1)$ or $\op{bga-decides}(\bot,0)$.
    \end{itemize}

    If $\var{value}$ is odd (3,5), the same reason from above, so any process $p_j$ with depth 8 computes $\var{value}'=\var{value}\pm 1$. But, this time, $p_j$ $\op{bga-decides}(\bot,0)$. And the binding property is fulfilled.

    We have shown that, if we take $d'=2$ and $d=8$, the algorithm satisfies all required properties. Therefore, Algorithm~\ref{alg:bga} solves asymmetric binding graded agreement with $d=8$ and $d'=2$. 
\end{proof}

\subsection{Consensus}
\label{sec:consensus-protocol}

Definition~\ref{def:aconsensus} introduces the concept of depth to the asymmetric consensus problem. We propose a consensus algorithm using the round-based paradigm of Abraham~\etal~\cite{DBLP:conf/podc/AbrahamBY22}. Each round consists of a call to the binding graded agreement and common coin primitives presented earlier. If the values obtained from the two primitives match, then the processes deliver it; otherwise, another round is executed.

We make a small addition to the common coin and binding graded agreement calls. When one of these sub-protocols is triggered, it is done so with a \var{round} tag. This is used to keep track of the round that is running and does not interfere with any details of the sub-protocols. When the protocol call returns, it will also include the tag with which it was called.

\begin{definition}[Depth-characterized asymmetric Byzantine consensus]\label{def:aconsensus}
  A protocol for \emph{asymmetric Byzantine consensus}, shortened as \op{C[d]}, defined through the events \op{c-propose(\var{val})} and \op{c-decide(\var{val})}, satisfies:

\begin{description}
\item[Termination:] Every process with depth $d$ eventually $\op{c-decides}$ with probability~$1$.

\item[Validity:] If all correct processes \op{c-propose} the same value \var{val}, then no processes with depth $d$ \op{c-decides} $1-\var{val}$.

\item[Agreement:] No two processes with depth $d$ \op{c-decide} different values.

\end{description}
\end{definition}

A difference with respect to symmetric round-based protocols is that not all processes will finish all the sub-protocol calls. Only processes with depth 8 can complete the graded agreement, and only those with depth 9 also finish the subsequent common coin. If we follow a naive approach, in which only those processes that completed the previous round participate in the next, the required depth increases with the number of rounds.

We address this fact with the strong termination property of binding grading agreement, which guarantees liveness, even when not every process starts the execution of the protocol. Without this element, after each round, liveness could only be guaranteed for processes with higher and higher depths. Since such a protocol can have executions with infinite length~\cite{DBLP:journals/jacm/FischerLP85}, we would only be able to provide guarantees for processes with infinite depth, which goes against our objectives here.

A process may execute multiple rounds in parallel. This is enabled by the usage of arrays (\var{\propose}, \var{\proposed}) to store the process states separately for each round. This design ensures that the process can continue sending messages for every round it has initiated, which may be needed by other processes in the system. 

Algorithm~\ref{alg:aconsensus} presents a solution to the asymmetric consensus problem for all processes with depth 9. It relies on the binding graded agreement and common coins previously introduced. Theorem~\ref{theorem:aconsensus} proves that it satisfies the properties described in Definition~\ref{def:aconsensus}. Lemmas~\ref{lem:c-cc-matching-2},~\ref{lem:c-d2-start-d7-end},~and~\ref{lem:c-d7-end-d2-start} show helper results that are used in the proof of the theorem. Since we want to prove the properties for all processes with depth 9, in the proofs, we will assume that there is at least one such process.

\begin{algo*}[tbhp]
\vbox{
\small
\begin{numbertabbing}\reset
  xxxx\=xxxx\=xxxx\=xxxx\=xxxx\=xxxx\=MMMMMMMMMMMMMMMMMMM\=\kill
  \textbf{State} \label{}\\
  \> \parbox[t]{0.87\linewidth} {$\var{bga-out} \gets [\ ]$: array that stores bga decisions for each round, initially empty} \label{} \\
  \> \parbox[t]{0.87\linewidth} {${\propose} \gets [\false, \false]^\infty$: $\propose[r][v]$ indicates if $v$ should be proposed for round $r$ } \label{} \\
  \> \parbox[t]{0.87\linewidth} {${\proposed} \gets [\false, \false]^\infty$: $\proposed[r][v]$ indicates if $v$ has been proposed for round $r$ } \label{} \\
  \\

  \textbf{upon event} \op{c-propose}(\var{value}) \textbf{do} \label{}\\
  \> ${\propose[{value}]}[1] \gets \true$ \label{} \\
  \\
  
  \parbox[t]{0.87\linewidth} {\textbf{upon } exists \var{round} and \var{value} where $\propose{[\var{round}][value]}$ and $\neg \proposed[[\var{round}]{value}]$ \textbf{do} } \label{line:ba-start-next-round} \\
  \> $\proposed[round][{value}] \gets \true$ \label{} \\
  \> \op{bga-propose(\var{value})} with tag \var{round} \label{line:c-propose} \\

  \\
  
  \parbox[t]{0.87\linewidth} {\textbf{upon } \op{bga-decide(\var{(value, grade)})} with tag \var{round} \textbf{do} // after finishing the BGA protocol}\label{line:c-bga-decide} \\
  \> \var{bga-out}[\var{round}] $\gets$ \var{(value, grade)} \label{line:c-v-gets-value} \\
  \> \op{release-coin()} with tag \var{round} // call the common coin protocol \label{} \\
  \\
  \parbox[t]{0.87\linewidth} {\textbf{upon } \op{output-coin(\var{c})} with tag \var{round} \textbf{do} // after finishing the common coin protocol} \label{line:c-output-coin} \\
  \>  $(\var{value}, \var{grade}) \gets \var{bga-out}[round]$ \label{} \\
  \> \textbf{if} $grade < 2$ \textbf{then} \label{line:c-low-grade} \\
  \>\> $\propose[round+1][{c}] \gets \true$ \label{line:c-v-gets-c} \\
  \>  \textbf{if} $grade > 0$ \textbf{then} // $value \neq \bot$ \label{line:c-high-grade} \\
  \>\> $\propose[round+1][{value}] \gets \true$  \label{} \\
  \> \textbf{if } $grade = 2$ and $\var{c} = value$  \textbf{then}  \label{line:c-decide} \\
  \>\>  \op{c-decide}(value) \label{} \\
  \\[-5ex]
\end{numbertabbing}
}
\caption{Asymmetric binary consensus  (code for~$p_i$)}
\label{alg:aconsensus}
\end{algo*}

\begin{lemma}
    If for all processes with depth at least 3 a kernel of correct processes  \op{bga-propose} the same value $v$ with tag \var{round} in Algorithm~\ref{alg:aconsensus}, 
    then for some value $v'$: all processes with depth at least $9$ \op{bga-propose} $v'$ with tag \var{round+1}.
    \label{lem:c-d2-start-d7-end}
\end{lemma}
\begin{proof}
    Assume that for all processes with depth at least 3 a kernel of correct processes \op{bga-propose} the same value $v$ with tag \var{round} (line~\ref{line:c-propose}), then by the strong termination property of \op{BGA[2,8]}, all processes with depth $8$ \op{bga-decide} on a value and a grade and thus invoke $\op{CC[8,9]}$ with tag \var{round} (line~\ref{line:c-bga-decide}).
    Then all processes with depth $9$ receive $\op{output-coin}$ for some value $c\in \zo$ (line~\ref{line:c-output-coin}) due to termination of $\op{CC[8,9]}$.
    These processes evaluate both the if statement in line~\ref{line:c-low-grade} and line~\ref{line:c-high-grade}. Due to graded agreement of $\op{BGA[2,8]}$ and the overlap on $g=1$ for the two conditions: if a process with depth $9$ has grade $g=2$, then all processes with depth 9 evaluate line~\ref{line:c-high-grade} to true (and propose the same value as $g>0$). Likewise, if a process with depth $9$ has grade $g=0$, then all processes with depth $9$ evaluate line~\ref{line:c-high-grade} to true (and propose the same value due to the matching property of $\op{CC}[8,9]$).
\end{proof}

\begin{lemma}
    If in Algorithm~\ref{alg:aconsensus}, all processes with depth at least $9$ \op{bga-propose} $v$ with tag \var{round} for the same value $v$,
    then these processes form a kernel for all processes with depth 3.
    \label{lem:c-d7-end-d2-start}
\end{lemma}
\begin{proof}
    Assume the existence of at least one process $p_{10}$ with depth 10.
    Assume that all processes with depth at least 9 \op{bga-propose} $v$ with tag \var{round} for the same value $v$.
    Consider a quorum for $p_{10}$ consisting of a subset of these depth $9$ processes. By Lemma~\ref{lem:q-always-k} these are a kernel for all processes with depth $1$. 

\end{proof}

\begin{lemma}
    The matching property of \op{CC}[8,9] holds for all processes with depth $2$.
    \label{lem:c-cc-matching-2}
\end{lemma}
\begin{proof}
    This follows directly from the proof of Theorem~\ref{theorem:acc}, as the proof of the matching property does not rely on the level of a correct process.
\end{proof}

\begin{theorem}
    Algorithm~\ref{alg:aconsensus} solves \op{C[10]}.
    \label{theorem:aconsensus}
\end{theorem}
\begin{proof}
    \textbf{Validity: } We assume that all correct processes (i.e., those with depth 0) start the execution with the same value $\var{val}$. Therefore, they $\op{bga-propose}(\var{val})$ and have no path to $\op{bga-propose}(\var{1-val})$. The validity property of binding graded agreement ensures that every process of depth $8$ $\op{bga-decides}(\var{val,2})$. Every process with depth $9$ receives the output of the coin and sets $\var{propose}[\var{round}][\var{val}]$ to $\true$, regardless of the value of the coin.  Thus, no process with depth 2 $\op{bga-proposes}(1-\var{val})$ and the validity of binding graded agreement, iterating over the same argument, ensures that no process with depth 11 $\op{c-decide}(1-\var{val})$.

    \textbf{Termination \& Agreement: } 
    We prove that every process with depth $9$ eventually $\op{c-decides}(b)$ on a common bit $b$ under the assumption that a process of depth $10$ exists.
    Given a round $r$, Lemma~\ref{lem:c-d2-start-d7-end} and Lemma~\ref{lem:c-d7-end-d2-start} implies that every process with depth 9 eventually invokes $\op{bga-propose}$ some value in round $r$.
    If in round $r$, all processes with depth $9$ propose the same value, then by strong termination of $\op{BGA}[2,8]$, and termination and matching of $\op{CC}[8,9]$: each process $p_i$ with depth $9$ ends up with a common coin $c\in \zo$ and a pair $(v_i,g_i)\in \{0,\bot,1\}\times \{0,1,2\}$. 
    By Lemma~\ref{lem:c-cc-matching-2}, the same holds for all processes of depth $2$, if they get both outputs.
    Additionally, since $\op{BGA}[2,8]$ satisfies binding for processes with depth $8$ and $\op{CC}[8,9]$ satisfies unpredictability for processes with depth $8$: the value of $c$ is unpredictable until a value $b\in\zo$ is fixed in the execution and restricts $v_i\in \{b,\bot\}$ for all processes with depth at least $8$.
    If any process $p_i$ with depth 9 has $g_i>0$, then $b=v_i$ by graded agreement of $\op{BGA}[2,8]$.
    Since $c$ is uniform and independent of $b$: $c$ is equal to $b$ with probability $\frac{1}{2}$, i.e., after $2$ iterations in expectation.
    Whenever this happens, all depth $9$ processes will propose this one value for round $r+1$ (line~\ref{line:c-high-grade} and line~\ref{line:c-low-grade}), and no process with depth $2$ will propose a different value. 
    Let $r'$ be the first round where $c'$ equals $b'$, and observe that this is the only condition under which a correct party may $\op{c-decide}(b')$ (line~\ref{line:c-decide}).
    Now, applying validity of \op{BGA}$[2,8]$ to all instances of \op{BGA}$[2,8]$ with tags $r'+1$ and up, it follows that all processes with depth $9$ \op{bga-decide}$(b',2)$ and no processes with \op{bga-decide} on any other value. 
    Whenever a coin $c''$ is equal to $b'$ again in some round $r''>r'$, all parties with depth $9$ \op{bga-decide}$(b)$.

\end{proof}

\section{Conclusion and Future Work}
\label{sec:conclusion}
Prior to this work, solutions to problems such as reliable broadcast and consensus relied on guild assumptions. Theorem~\ref{th:guild-implies-sym} shows that this assumption effectively reduces to the existence of a symmetric system, constructed from the original assumptions, that provides equivalent or stronger guarantees. In other words, existing solutions are not well-suited for genuinely asymmetric settings.

To overcome this limitation, we introduced a more fine-grained way to characterize asymmetry through the notion of \emph{depth}. We proved that reliable broadcast can be solved for all processes with depth at least 3, and consensus for processes with depth at least 9. This significantly relaxes the requirements compared to prior work by Alpos~\etal~\cite{DBLP:journals/dc/AlposCTZ24}, which established feasibility only for processes with infinite depth (i.e., guild members).

The exact minimum depth required for solving asymmetric reliable broadcast and consensus remains an open problem. For reliable broadcast, the lower bound is either 2 or 3; for consensus, it lies between 1 and 9. Determining these thresholds is crucial, as they capture the precise necessary and sufficient conditions for solving problems under asymmetric trust. Previous work~\cite{DBLP:journals/dc/AlposCTZ24, DBLP:conf/wdag/LiCL23} established sufficiency but left necessity unproven.

Finally, an important direction for future research is to explore whether the concepts and techniques introduced here extend to other asymmetric trust models~\cite{DBLP:conf/wdag/LiCL23, DBLP:conf/wdag/LosaGM19}.

\section*{Acknowledgments}

This work was supported by the Swiss National Science Foundation (SNSF)
under grant agreement Nr\@.~219403 (Emerging Consensus), by the Initiative for
Cryptocurrencies and Contracts (IC3), by the Cryptographic Foundations for
Digital Society, CryptoDigi, DFF Research Project 2, Grant ID
10.46540/3103-00077B, and by the European Union, ERC2023-StG-101116713. Views and opinions expressed are those of the author(s) only and do not necessarily reflect those of the European Union. Neither the European Union nor the granting authority can be held responsible for them.

\bibliographystyle{plainurl}% the mandatory bibstyle
%\clearpage
\bibliography{references, dblpbibtex}

\newpage

\end{document}